\newenvironment{frontmatter}{}{}
\newenvironment{keyword}{{\noindent \bf Keywords: }}{}
\newcounter{linecounter}
\renewcommand{\line}[1]{\refstepcounter{linecounter}\label{#1}(\arabic{linecounter})}
\newcommand{\resetline}[1]{\setcounter{linecounter}{0}#1}
\newtheorem{theorem}{Theorem}
\newtheorem{lemma}{Lemma}
\DeclareMathOperator{\Forall}{\forall\,}
\newcommand{\br}{brb\xspace}
\newcommand{\mbr}{mbrb\xspace}
\newcommand{\MBR}{MBRB\xspace}
\newcommand{\MBRB}{MBRB\xspace}
\newcommand{\mbrbassum}{mbrb-Assumption\xspace}
\newcommand{\send}{\ensuremath{\mathsf{send}}\xspace}
\newcommand{\broadcast}{\ensuremath{\mathsf{broadcast}}\xspace}
\newcommand{\received}{\ensuremath{\mathsf{received}}\xspace}
\newcommand{\mbrbroadcast}{\ensuremath{\mathsf{\mbr\_broadcast}}\xspace}
\newcommand{\mbrdeliver}{\ensuremath{\mathsf{\mbr\_deliver}}\xspace}
\newcommand{\brbroadcast}{\ensuremath{\mathsf{\br\_broadcast}}\xspace}
\newcommand{\mmax}{\ensuremath{\mathsf{min}}\xspace}
\newcommand{\tb}{\ensuremath{t}\xspace}
\newcommand{\tm}{\ensuremath{d}\xspace}
\newcommand{\lgd}{\ensuremath{\ell}\xspace}
\newcommand{\rtc}{\ensuremath{\lambda}\xspace}
\newcommand{\omc}{\ensuremath{\mu}\xspace}
\newcommand{\klem}{\ensuremath{k}\xspace}
\newcommand{\llem}{\ensuremath{\ell_2}\xspace}
\newcommand{\llemmin}{\ensuremath{\ell_{2,\mathsf{min}}}\xspace}
\newcommand{\qlem}{\ensuremath{q}\xspace}
\newcommand{\fnlem}{\ensuremath{f}\xspace}
\newcommand{\sAc}{\ensuremath{s_A}\xspace}
\newcommand{\sBc}{\ensuremath{s_B}\xspace}
\newcommand{\sn}{\ensuremath{\mathit{sn}}\xspace}
\newcommand{\sig}{\ensuremath{\mathit{sig}}\xspace}
\newcommand{\sigs}{\ensuremath{\mathit{sigs}}\xspace}
\newcommand{\msgm}{\textsc{msg}\xspace}
\newcommand{\bundlem}{\textsc{bundle}\xspace}
\newcommand{\imp}{imp-message\xspace}
\newcommand{\app}{app-message\xspace}
\newcommand{\imps}{imp-messages\xspace}
\newcommand{\apps}{app-messages\xspace}
\begin{document}

\begin{frontmatter}
  
\title{Asynchronous Byzantine Reliable Broadcast\\
With a Message Adversary}



\author{Timoth\'e Albouy,
           ~Davide Frey,
           ~Michel Raynal,
          ~Fran\c{c}ois Ta\"{i}ani~\\~\\
  Univ Rennes, IRISA, CNRS, Inria, 35042 Rennes, France\\
  \textsf{\small\{timothe.albouy,michel.raynal,francois.taiani\}@irisa.fr, davide.frey@inria.fr}
}

\date{}

\maketitle

\begin{abstract}
This paper considers the problem of reliable broadcast in asynchronous authenticated systems, in which $n$ processes 
communicate using signed messages and 
up to \tb processes may behave arbitrarily (Byzantine processes).
In addition, for each message $m$ broadcast by 
a correct (i.e., non-Byzantine) process, a message adversary may prevent up to \tm correct processes from receiving $m$.
(This message adversary captures network failures
such as transient disconnections, silent churn, or message losses.)
Considering such a ``double'' adversarial context
and assuming $n>3\tb+2\tm$, a reliable broadcast algorithm is presented.
Interestingly,  when there is no message adversary  (i.e., $\tm=0$),
the algorithm terminates in two communication steps
(so, in this case, this algorithm is optimal in terms of both 
Byzantine tolerance and time efficiency).
It is then shown that the condition $n>3\tb+2\tm$ is necessary for implementing reliable broadcast in the presence of both Byzantine processes and a message adversary
(whether the underlying system is enriched with signatures or not).
\end{abstract}







\begin{keyword}
Asynchronous system,
Byzantine processes,
Churn,
Message adversary,
Message losses, 
Message-passing,
Message signatures, 
Reliable broadcast,
Transient disconnection.
\end{keyword}

\end{frontmatter}


\section{Introduction}

\paragraph{Reliable broadcast}
Introduced in the mid-eighties, {\it Reliable Broadcast}
is a fundamental communication abstraction that lies
at the center of fault-tolerant asynchronous distributed systems.
Formally defined in~\cite{B87,BT85}, it allows each process
to broadcast messages in the presence of process failures,
with well-defined delivery properties\footnote{The term {\it delivery}
refers here to the application layer where a process receives
and processes the content of an application message
(see Section~\ref{sec:model}).}.
In turn, these properties make it possible to design provably correct distributed software for 
upper-layer applications based on such a broadcast abstraction.

Intuitively, reliable broadcast guarantees that
the non-faulty processes deliver the same set of messages,
which includes at least all the messages they broadcast.
This set may also contain messages broadcast by faulty processes.
The fundamental property of reliable broadcast lies in the
fact that no two non-faulty processes deliver different sets of
messages~\cite{CGR11,R18}.

When some processes may suffer from Byzantine
failures~\cite{LSP82}, designing a reliable broadcast
communication abstraction that tolerate such failures is far from trivial.
Such an algorithm is called Byzantine-tolerant reliable broadcast (BRB)
and we say that a process \br-broadcasts and \br-delivers messages.
The most famous BRB algorithm is due to Bracha~\cite{B87} (1987).
For an application message, this algorithm
gives rise to three sequential communication steps and up to
$(n-1)(2n+1)$ implementation messages sent by correct processes.
This algorithm requires $n>3\tb$, which is optimal in terms of
fault tolerance.

\paragraph{Recent works related to reliable broadcast}
Due to its fundamental nature, BRB has been addressed by many authors.  
Here are a few recent results.
Similarly to Bracha's algorithm, all these algorithms assume an underlying fully
connected reliable network. 
\begin{itemize}
    \item The versatility dimension of Bracha's algorithm has been
    analyzed in~\cite{HKL20,R21}.
    
    \item Addressing efficiency issues, 
    the BRB algorithm presented in~\cite{IR16} implements 
    the reliable broadcast of an application
    message with only two communication steps and up to
    $n^2-1$ implementation messages sent by correct processes.
    The price to pay for this gain in efficiency
    is a weaker \tb-resilience, namely $\tb < n/5$.
    Hence, this algorithm and Bracha's algorithm differ in their trade-off
    between \tb-resilience and message/time efficiency.
    
    \item Scalable BRB is addressed in~\cite{GKMPS20}.
    The goal of this work is to avoid paying the $O(n^2)$ message
    complexity price.
    To this end, the authors use a non-trivial
    message-gossiping approach which allows them to design a sophisticated
    BRB algorithm satisfying probability-dependent properties.
    
    
    \item
    BRB in dynamic systems is addressed in~\cite{GKKPST20}
    ({\it dynamic} means that a process can enter and leave the system
    at any time).
    In their article, the authors present an efficient BRB algorithm for such a context.
    This algorithm assumes that, at any time, there are at least
    two times more correct processes than Byzantine ones in the system.
    
    \item An efficient algorithm for BRB with long inputs of $b$ bits
    using lower costs than $b$ single-bit instances is presented in~\cite{NRSVX20}.
    This algorithm, which assumes $\tb < n/3$,
    achieves the best possible communication complexity of
    $\Theta(n b)$ input sizes.
    This article also presents an authenticated extension of this solution. 
\end{itemize}

The work presented in this paper\footnote{A very preliminary version of this work appeared in~\cite{AFRT21}.}
goes beyond the previous  proposals by considering the conjunction of two types of adversary: as in the above works, processes may be \textit{Byzantine}, but in addition a \textit{message adversary} may also remove implementation messages between correct processes. 
More precisely, this work addresses the problem of fault-tolerant reliable broadcast 
in asynchronous $n$-process message-passing systems 
enriched with message signatures, 
in which up to \tb processes are Byzantine, 
and a message adversary that may prevent up to \tm
non-Byzantine processes from delivering an implementation message
broadcast by a non-Byzantine process.
This dual fault model originated from our
research on the reconciliation of local process states in
distributed Byzantine-tolerant money transfer systems (a.k.a. cryptocurrencies), in which processes become temporarily disconnected.
Several researchers have indeed pointed out the fundamental role that broadcast abstractions play in Byzantine money transfer systems (see, for instance, \cite{AFRT20,CK21,CGKKMPPSTX20,DKSS22,GKMPS19,GKKPST20}).
This crucial role naturally leads to considering how Byzantine broadcast can be expanded to more volatile and dynamic settings, thus motivating our proposal to combine traditional Byzantine faults with a message adversary.

\begin{table}[ht]
\begin{center}
\renewcommand{\baselinestretch}{1}
\small
\begin{tabular}{|c|c|}
\hline
{\bf Acronyms} & {\bf Meaning}\\
\hline
\hline
BRB & Byzantine-tolerant reliable broadcast \\
\hline
MA & Message adversary \\
\hline
\MBRB & Message adversary- and Byzantine-tolerant reliable broadcast\\
\hline\hline
{\bf Notations} & {\bf Meaning}\\
\hline $n$ & number of processes in the network \\
\hline \tb & upper bound on the number of Byzantine processes \\
\hline \tm & power of the message adversary \\
\hline $c$ & effective number of correct processes in a run ($n-\tb \leq c \leq n$) \\
\hline \lgd & minimal nb of correct processes that \mbr-deliver a message\\
\hline \rtc & time complexity of \MBRB \\
\hline \omc & message complexity of \MBRB \\
\hline
\end{tabular}
\end{center}
\vspace{-0.4cm}
\caption{Acronyms and notations}
\label{terminology}
\end{table}

The paper is made up of~\ref{sec:conclusion} sections.
\begin{itemize}
    \item Section~\ref{sec:model-mbrb} defines the computing model and the Message Adversary-Tolerant Byzantine Reliable Broadcast communication abstraction (or \MBRB for short).
    
    \item Section~\ref{sec:mbrb-impl} presents a signature-based
    algorithm implementing the \MBRB abstraction,
    proves it is correct, and   evaluates its cost.
    When there is no message adversary, this algorithm is optimal 
    from both Byzantine resilience and the number of
    communication steps\footnote{The signature-free BRB algorithm
    described in~\cite{B87} is optimal with respect to
    Byzantine resilience ($\tb<n/3$), but requires three
    communication steps, while the signature-free BRB algorithm
    described in~\cite{IR16} is optimal with respect to
    the number of communication steps (2) but is not with respect
    to Byzantine resilience (it requires $\tb<n/5$).}.
    
    \item Section~\ref{sec:mbrb-tight} shows that the
    condition $n > 3\tb+2\tm$ is necessary and sufficient
    for implementing the \MBRB communication abstraction
    (be the underlying system enriched with signatures or not).
    
    \item Finally, Section~\ref{sec:conclusion} concludes the article.
\end{itemize}

%

\section{Computing Model and MBRB Abstraction}
\label{sec:model-mbrb}

\subsection{Computing Model}
\label{sec:model}

\paragraph{Process model}
The system is composed of $n$ asynchronous sequential processes denoted $p_1$, ..., $p_n$.
Each process $p_i$ has an identity, and all the identities are different and known by all processes.
To simplify, we assume that $i$ is the identity of $p_i$.

Regarding failures, up to \tb processes can be Byzantine, where a Byzantine process is a process whose behavior does not follow the code specified by its algorithm~\cite{LSP82,PSL80}. 
Let us notice that Byzantine processes can collude to fool the non-Byzantine processes (also called correct processes).
Let us also notice that, in this model, the premature stop (crash) of a process is a Byzantine failure.

Moreover, given an execution, $c$ denotes the number of processes that effectively behave correctly in that execution.
We always have $n-\tb \leq c \leq n$.
While this number remains unknown to correct processes, it is used in the following to analyze and characterize (more precisely than using its worse value $n-\tb$) the guarantees provided by the proposed algorithms.

\paragraph{Communication model}
The processes communicate through a fully connected asynchronous point-to-point communication network.
Although this network is assumed to be reliable---in the sense that it neither corrupts, duplicates, nor creates messages---it may nevertheless lose messages due to the actions of a message adversary (defined below).


Let \msgm be a message type and $v$ the associated value. 
A process can invoke the unreliable operation \broadcast $\msgm(v)$, which is a shorthand for ``{\bf for all} $i\in\{1,\cdots, n\}$ {\bf do} \send $\msgm(v)$ {\bf to} $p_j$ {\bf end for}''.
It is assumed that all the correct processes invoke \broadcast to send messages. 
As we can see, the operation \broadcast $\msgm(v)$ is not reliable.
As an example, if the invoking process crashes during its invocation, an arbitrary subset of processes receive the message implementation message $\msgm(v)$.
Moreover, due to its very nature, a Byzantine process can send messages without using the macro-operation \broadcast.

From a terminology point of view, at the system/network level, we say that messages are {\it broadcast} and {\it received}. 
Moreover, a message generated by the algorithm is said to be an {\it implementation} message (\imp in short), while a message generated by the application layer is said to be an {\it application} message (\app in short).


\paragraph{Message adversary}
The notion of a {\it message adversary} (MA) was implicitly introduced in~\cite{SW89} (under the name {\it transient faults}  and {\it ubiquitous faults}) and then used  (sometimes implicitly) in many works (e.g.,~\cite{AG13,CS09,RS13,SW07,TZKZ20}). 
A short tutorial on message adversaries is presented in~\cite{R15}.

Let \tm be an integer constant such that $0 \leq \tm < c$.
The communication network is under the control of an adversary which eliminates \imps sent by processes, so that these \imps appear as being lost.
More precisely, when a correct process invokes \broadcast $\msgm(v)$, the message adversary is allowed to arbitrarily suppress up to \tm copies of the \imp $\msgm(v)$ that were intended to correct processes.
This means that, despite the fact the sender is correct, up to \tm correct processes may miss the \imp $\msgm(v)$\footnote{
A close but different notion was introduced by Dolev in~\cite{D82} (and explored in subsequent works, such as \cite{BDFRT21}) which considers static $k$-connected networks.
If the adversary selects statically for each correct sender \tm correct processes that do not receive this sender's \imps,
the proposed model includes Dolev's model with $k = n-\tm$.}.

As an example, consider a set $D$ of correct processes, where $1\leq |D|\leq d$, such that during some period of time, the adversary suppresses all the \imps sent to them.
It follows that, during this period of time, this set of processes appears as a set of correct processes that are (unknowingly) input-disconnected from the other correct processes.
Depending on the message adversary, the set $D$ may vary with time. 
Let us notice that $\tm=0$ corresponds to the weakest possible message adversary: it corresponds to a classical static system where some processes are Byzantine but no \imp is lost (the network is fully reliable).

Let us remark that this type of message adversary is stronger, and therefore covers, the more specific case of \emph{silent churn}, in which processes (nodes) may decide to disconnect from the network.
While disconnected, such a process silently pauses its algorithm (a legal behavior in our asynchronous model), and is implicitly moved (by the adversary) to the $D$ adversary-defined set.
Upon coming back, the node resumes its execution, and is removed from $D$ by the adversary.\footnote{So the notion of a message adversary implicitly includes the notion of \imp omission failures.}

Informally, in a silent churn environment, a correct process may miss \imps sent by other processes while it is disconnected from the network.
The adjective ``silent'' in {\it silent churn} expresses the fact that no notification is sent on the network by processes whenever they leave or join the system: there is no explicit ``attendance list'' of connected processes, and processes are given no information on the status (connected/disconnected) of their peers.
In this regard, the silent churn model diverges from the classical approach when designing dynamic distributed systems, in which processes send \imps on the network notifying their connection or disconnection \cite{GKKPST20}.
The silent churn model is a good representation of real-life large-scale peer-to-peer systems, where peers leaving the network typically do so in a completely silent manner (i.e., without warning other peers).

Let us also observe that silent churn allows us to model
input-disconnections due to process mobility.
When a process moves from one location to another, the sender's broadcasting range may not be
large enough to ensure that the moving process remains input-connected.
An even more prosaic example would be one where a user simply turns off their device, or disable its Internet connection, preventing it from receiving or sending any further \imps.
In this context, we consider that the message adversary removes all the incoming \imps from the corresponding process until the device reconnects.

Let us mention that the loss of \imps caused by a message adversary may be addressed using a reliable unicast protocol.
These protocols were originally introduced to provide reliable channels on top of an unreliable network subject to \imp losses. 
The principle is simple: the sender keeps sending idempotent \imps at regular intervals through an unreliable channel until it receives an acknowledgement from the receiver.
This principle notoriously lies at the core of the Transmission Control Protocol (TCP), although with important practical adaptations, as TCP uses timeouts to close a malfunctioning or otherwise idle connection, typically after a few minutes.

But because there is no way to detect that a process has crashed or disconnected in an asynchronous environment, an ideal reliable unicast protocol (i.e. one that keeps on re-transmitting until success) needs to treat disconnected processes the same way as slow processes or as if there were packet losses in the network: the sender will thus potentially send infinitely many \imps to a disconnected receiver.
To overcome this issue, some works leverage causal dependencies to avoid resending old \imps: if an acknowledgement is received by the sender for a given \imp, then it can stop resending the \imps that causally precede this \imp and that have not been acknowledged yet (e.g.~\cite{DKSS22}).
However, this approach still assumes that eventually, every communication channel lets some \imps pass, which is not always the case in our model, where the message adversary can permanently sever up to \tm channels.

\paragraph{Digital signatures}
We assume the availability of an asymmetric cryptosystem
to sign data (in practice, \imps) and verify its authenticity.
We assume signatures are secure, and therefore that the computing power of the adversary is bounded.
Every process in the network has a public/private key pair.
We suppose that the public keys are known to everyone, and that the private keys are kept secret by their owner.
Everyone also knows the mapping between any process' identity $i$ and its public key.
Additionally, we suppose that each process can produce at most one signature per \imp.

The signatures are used to cope with the net effect of the Byzantine processes and the fact that \imps broadcast (sent) by correct processes can be eliminated by the message adversary.
A noteworthy advantage of signatures is that, despite the unauthenticated nature of the point-to-point communication channels, signatures allow correct processes to verify the authenticity of \imps that have not been directly received from their initial sender, but rather relayed through intermediary processes.
Signatures provide us with a {\it network-wide} non-repudiation mechanism: if a Byzantine process issues two conflicting \imps to two different subsets of correct processes, then the correct processes can detect the malicious behavior by disclosing to each other the Byzantine signed \imps.\footnote{
The fact that the algorithm uses signed \imps does
not mean that \MBR-broadcast requires signatures to be implemented, see~\cite{AFRT22}.}

\subsection{Message Adversary-Tolerant Byzantine Reliable Broadcast (\MBRB)}
\label{sec:mbrb}
This paper introduces a new broadcast abstraction we have called \textit{Message Adversary-Tolerant Byzantine Reliable Broadcast} (\MBRB for short.)
The \MBRB communication abstraction is composed of two matching operations,
denoted \mbrbroadcast and \mbrdeliver.
It considers that an identity $(i,\sn)$ (sender identity, sequence number) is associated with each \app,
and assumes that any two \apps \mbr-broadcast by the same correct process have different sequence numbers.
Sequence numbers are one of the most natural ways to design ``multi-shot'' reliable broadcast algorithms,
that is, algorithms where the broadcast operation can be invoked
multiple times with different \apps.

When, at the application level, a process $p_i$ invokes $\mbrbroadcast(m,\sn)$, where $m$ is the \app and \sn the associated sequence number, we say $p_i$ ``\mbr-broadcasts $(m,\sn)$''.
Similarly, when $p_i$ invokes $\mbrdeliver(m,\sn,j)$, where $p_j$ is the sender process, we say $p_i$ ``\mbr-delivers $(m,\sn,j)$''.
We say that the \apps are {\it \mbr-broadcast} and {\it \mbr-delivered}
(while, as said previously, the \imps algorithm are {\it broadcast} and {\it received}).

\paragraph{Correctness specification}
Because of the message adversary, we cannot always guarantee that an \app \mbr-delivered by a correct process is eventually \mbr-delivered by all correct processes.
Hence, in the \MBRB specification, we introduce a variable \lgd which indicates the strength of the global delivery guarantee of the primitive:
if one correct process \mbr-delivers an \app then \lgd correct processes eventually \mbr-deliver this \app.\footnote{If there is no message adversary (i.e., $\tm = 0$), we should have $\lgd = c \geq n-\tb$.}
The \MBR-broadcast abstraction is defined by the following properties:
\begin{itemize}
\item Safety:
  \vspace{0.5cm}
\begin{itemize}
    \vspace{-0.6cm}
    \item \MBR-Validity (no spurious message). 
    If a correct process $p_i$ \mbr-delivers an \app $m$ from 
    a correct process $p_j$ with sequence number \sn, then $p_j$ \mbr-broadcast $m$ with sequence number \sn.
    \vspace{-0.1cm}
    \item \MBR-No-duplication.
    A correct process $p_i$ \mbr-delivers at most one \app 
    $m$ from a process $p_j$ with sequence number \sn.
  
    \vspace{-0.1cm}
    \item \MBR-No-duplicity.
    No two different correct processes \mbr-deliver different 
    \apps from a process $p_i$ with the same sequence number~\sn.
\end{itemize}
  \vspace{-0.3cm}
\item Liveness:
  \vspace{-0.1cm}
\begin{itemize}
    \vspace{-0.1cm}
    \item \MBR-Local-delivery.
    If a correct process $p_i$ \mbr-broadcasts an \app $m$ with sequence number \sn, 
    then at least one correct process $p_j$ eventually \mbr-delivers $m$ from $p_i$ with sequence number \sn.
    
    \vspace{-0.1cm}
    \item \MBR-Global-delivery.
    If a correct process $p_i$ \mbr-delivers an \app $m$ from a process $p_j$ with sequence number \sn,
    then at least \lgd correct processes \mbr-deliver $m$ from
    $p_j$ with sequence number \sn.
\end{itemize}
\end{itemize}

It is implicitly assumed that a correct process does not use the same sequence number twice.
Let us observe that, since at the implementation level the message adversary can always suppress all the \imps sent to a fixed set $D$ of \tm processes, the best-guaranteed value for \lgd is $c-\tm$.
Furthermore, let us notice that the constraint $n>2\tm$ prevents the message adversary from partitioning the system.

\paragraph{Performance metrics}
In addition to the correctness specification, we define two metrics that capture the performance of an algorithm implementing the \MBRB specification:
\rtc and \omc, which respectively denote the communication step complexity and
the \imp complexity of the algorithm.
They are defined as follows:
\begin{itemize}
    \vspace{-0.1cm}
    \item \MBR-Time-cost.
    If a correct process $p_i$ \mbr-broadcasts an \app $m$ with sequence number \sn, 
    then \lgd correct processes \mbr-deliver $m$ from $p_i$ with sequence number \sn in at most \rtc communication steps.
    
    \vspace{-0.1cm}
    \item \MBR-Message-cost.
    The \mbr-broadcast of an \app by a correct process $p_i$
    entails the sending of at most \omc \imps by correct processes.
\end{itemize}

\paragraph{Byzantine Reliable Broadcast (BRB)}
If $\lgd=c$ (obtained when $\tm=0$), the previous 
specification boils down to 
Bracha's seminal specification~\cite{B87},
which defines the Byzantine reliable broadcast (BRB) communication abstraction.
Hence, the BRB abstraction is a sub-case of \MBRB.

\section{A Signature-based Algorithm Implementing the \MBRB Abstraction}
\label{sec:mbrb-impl}
This section presents Algorithm~\ref{algo:sb-mbrb}, which implements the
\MBRB communication abstraction in an asynchronous setting
under the constraint $n > 3\tb+2\tm > 0$.
When considering $\tm=0$, this algorithm provides both \tb-tolerance
optimality (as in~\cite{B87}) and step optimality (as in~\cite{IR16}):
it only assumes $n > 3\tb$,
and guaranteed \mbr-delivery of an \app  in only
two communication steps\footnote{Signature-based BRB in only two communication steps is a known result~\cite{ANRX21}, however, to the best of our knowledge, no existing BRB algorithm tolerates message adversaries as well as ours.}.
It follows that signatures can help save one communication step compared to classical signature-free BRB algorithms that assume $\tb < n/3$.
Algorithm~\ref{algo:sb-mbrb} fulfills the \MBR-Global-delivery property with $\lgd = c-\tm$ under the following assumption:
\begin{itemize}
    \item \mbrbassum: $n > 3\tb+2\tm$.
\end{itemize}

\subsection{Preliminaries} \label{sec:mbrb-impl-prel}

\paragraph{Implementation message types}
The algorithm uses only one \imp type, \bundlem,
that carries the signatures backing a given \app $m$, along with $m$'s content, sequence number, and emitter. \bundlem \imps propagate through the network using controlled flooding.

\paragraph{Local data structures}
Each (correct) process saves locally the valid signatures (i.e., the signed fixed-size digests of a certain data)
that it has received from other processes using \bundlem \imps. Each signature ``endorses'' a certain \app $(m,\sn,j)$.
When certain conditions are met (described below), a process further broadcasts in a \bundlem \imp all signatures it knows for a given triplet $(m,\sn,j)$.
A correct process $p_i$ saves at most one signature for a given triplet $(m,\sn,j)$ per signing process $p_k$.

\paragraph{Time measurement}
For the proofs related to  \MBR-Time-cost (Lemmas~\ref{lem:amt-dlv-2-rnd}-\ref{msg-cost}),
we assume that the duration of local computations
is negligible compared to that of \imp transfer delays, and 
consider them to take zero time units.
As the system is asynchronous, the time is measured 
under the traditional assumption that all the \imps 
have the same transfer delay. 
\subsection{Algorithm} \label{sec:mbrb-impl-algo}
At a high level, Algorithm~\ref{algo:sb-mbrb} works by producing, forwarding, and accumulating \emph{witnesses} of an initial \mbr-broadcast operation, until a large-enough quorum 
is observed by at least one correct process, at which point this quorum is propagated in one final
unreliable \broadcast operation.

Witnesses take the form of signatures for a given triplet $(m,\sn,i)$,
where $m$ is the \app, \sn its associated sequence number
and $i$ the identity of the sender $p_i$ (which also produces a signature for $(m,\sn,i)$).
Signatures serve to ascertain the provenance and authenticity of these propagated \bundlem \imps,
thus providing a key ingredient to tolerate the limited reliability of the underlying network.
They also authenticate the invoker of the \mbrbroadcast operation, and finally,
in the last phase of the algorithm, they allow the propagation of a cryptographic proof
that a quorum has been reached, thereby ensuring that enough correct processes eventually
\mbr-deliver the \app that was \mbr-broadcast.
\begin{algorithm}[ht]
\centering
\fbox{
\begin{minipage}[t]{150mm}
\footnotesize
\renewcommand{\baselinestretch}{2.5}
\resetline
\begin{tabbing}
AAA\=aA\=aA\=aA\=aA\=\kill

{\bf operation} $\mbrbroadcast(m,\sn)$ {\bf is}\\

\line{SB-MBRB-save-own-sig-init}
\> save signature for $(m,\sn,i)$ by $p_i$;\\

\line{SB-MBRB-bcast}
\> \broadcast $\bundlem(m,\sn,i,\{\text{all saved signatures for }(m,\sn,i)\})$.\\~\\

{\bf when} $\bundlem(m,\sn,j,\sigs)$ {\bf is} $\received$ {\bf do}\\

\line{SB-MBRB-cond-vld}
\> {\bf if} \big($(-,\sn,j)$ not already \mbr-delivered\\
\>\> $\land$ \sigs contains the valid signature for $(m,\sn,j)$ by $p_j$\big) {\bf then}\\

\line{SB-MBRB-save-sigs}
\>\> save all unsaved valid signatures for $(m,\sn,j)$ of \sigs;\\

\line{SB-MBRB-cond-fwd}
\>\> {\bf if}
\big($(-,\sn,j)$ not already signed by $p_i$\big) {\bf then}\\

\line{SB-MBRB-save-own-sig-fwd}
\>\>\> save signature for $(m,\sn,j)$ by $p_i$;\\

\line{SB-MBRB-fwd}
\>\>\> \broadcast $\bundlem(m,\sn,j,\{\text{all saved signatures for }(m,\sn,j)\})$\\

\line{SB-MBRB-end-cond-fwd}
\>\> {\bf end if};\\

\line{SB-MBRB-cond-dlv}
\>\> {\bf if}
\big(strictly more than $\frac{n+\tb}{2}$ signatures for $(m,\sn,j)$ are saved\big) {\bf then}\\

\line{SB-MBRB-bcast-quorum}
\>\>\> \broadcast $\bundlem(m,\sn,j,\{\text{all saved signatures for }(m,\sn,j)\})$;\\

\line{SB-MBRB-dlv}
\>\>\> $\mbrdeliver(m,\sn,j)$\\

\line{SB-MBRB-end-cond-dlv}
\>\> {\bf end if}\\

\line{SB-MBRB-end-cond-vld}
\> {\bf end if}.
\end{tabbing}
\normalsize
\end{minipage}
}
\caption{A signature-based implementation of the \MBRB communication abstraction (code for $p_i$)}
\label{algo:sb-mbrb}
\end{algorithm}

In more detail, when a (correct) process $p_i$ invokes $\mbrbroadcast(m,\sn)$, it builds and signs
the triplet $(m,\sn,i)$ to guarantee its non-repudiation,
and saves locally the resulting signature (line~\ref{SB-MBRB-save-own-sig-init}).
Next, $p_i$ broadcasts the \bundlem \imp containing the signature
that it just produced (line~\ref{SB-MBRB-bcast}).

When a correct process $p_i$ receives a $\bundlem(m,\sn,j,\sigs)$ \imp, it first checks if no \app has already been \mbr-delivered for the given sequence number \sn and sender $p_j$, and if $p_j$ signed the \app (line~\ref{SB-MBRB-cond-vld}).
If this condition is satisfied, $p_i$ saves all the new valid signatures
inside the \sigs set (line~\ref{SB-MBRB-save-sigs}).
Next, $p_i$ creates and saves its own signature for $(m,\sn,j)$
and then broadcasts it in a \bundlem \imp, if it has not already done so previously
(line~\ref{SB-MBRB-cond-fwd}-\ref{SB-MBRB-end-cond-fwd}).
Finally, if $p_i$ has saved a quorum of strictly more than $\frac{n+\tb}{2}$ signatures
for the same triplet $(m,\sn,j)$, it broadcasts a \bundlem \imp containing all
these signatures and \mbr-delivers the triplet
(lines~\ref{SB-MBRB-cond-dlv}-\ref{SB-MBRB-end-cond-dlv}).\footnote{The pseudo-code presented in Algorithm~\ref{algo:sb-mbrb} favors readability, and is therefore not fully optimized. For instance, in some cases, a process might unreliably broadcast exactly the same content at lines~ \ref{SB-MBRB-fwd} and~\ref{SB-MBRB-bcast-quorum}. This could be avoided by either using an appropriate flag, or by tracking and preventing the repeated broadcast of identical \bundlem \imps.}

\paragraph{Remark}
The reader can notice that the system parameters $n$ and \tb appear in the algorithm,
whereas the system parameter \tm does not.
Naturally, they all explicitly appear in the proof. 

\subsection{Algorithm proof} \label{sec:mbrb-impl-proof}
This section proves the correctness and performance properties of \MBRB.

\begin{theorem}
\label{theo:sb-mbr-correctness}
If the \mbrbassum is satisfied, Algorithm~{\em  \ref{algo:sb-mbrb}} implements the \mbr-broadcast of an \app by a correct process with the following guarantees:
\begin{itemize}
    \item $\lgd = c-\tm$ correct processes,
    
    \item $\rtc = \left\{\begin{array}{ll}
    2 & \text{if}~~
        \tm < \frac{c - \lfloor\frac{n+\tb}{2}\rfloor}
        {\lfloor\frac{n+\tb}{2}\rfloor +1} \\
    3 & \text{if}~~
        \tm < c-\sqrt{c \times \frac{n+\tb}{2}} \\
    > 3 & \text{otherwise}
    \end{array}\right\}$ communication steps,
    
    \item $\omc = 2n^2$ \imps.
\end{itemize}
\end{theorem}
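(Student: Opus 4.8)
The statement bundles together five claims: the resilience parameter $\lgd = c - \tm$, the three safety properties, the two liveness properties, the time bound $\rtc$ with its three regimes, and the message bound $\omc = 2n^2$. My plan is to prove these in the natural dependency order, isolating each as its own lemma (which the excerpt's forward references to Lemmas~\ref{lem:amt-dlv-2-rnd}--\ref{msg-cost} anticipate).

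\emph{Safety.} \MBR-No-duplication is immediate from the guard ``$(-,\sn,j)$ not already \mbr-delivered'' at line~\ref{SB-MBRB-cond-vld}. \MBR-Validity follows because a correct $p_i$ only \mbr-delivers $(m,\sn,j)$ after saving strictly more than $\frac{n+\tb}{2}$ valid signatures for $(m,\sn,j)$; since at most $\tb$ of those come from Byzantine processes and $\frac{n+\tb}{2} - \tb = \frac{n-\tb}{2} > 0$, at least one correct process signed $(m,\sn,j)$, and a correct process signs a triplet only after receiving a \bundlem \imp containing $p_j$'s own signature on it --- but when $p_j$ is correct, $p_j$ signs only what it actually \mbr-broadcast (here I would note that the ``at most one signature per \imp'' assumption and unforgeability of signatures are what make this airtight). \MBR-No-duplicity is the classic quorum-intersection argument: if correct $p_i$ \mbr-delivers $(m,\sn,j)$ and correct $p_{i'}$ \mbr-delivers $(m',\sn,j)$ with $m \neq m'$, each collected $> \frac{n+\tb}{2}$ signatures, so the two signer sets overlap in $> n + \tb - n = \tb$ processes, hence in at least one correct process; but a correct process signs at most one triplet for a given $(\sn,j)$ (line~\ref{SB-MBRB-cond-fwd} guard ``$(-,\sn,j)$ not already signed by $p_i$''), a contradiction. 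Note this argument uses only $n > 3\tb$, not the full \mbrbassum.

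\emph{Liveness and the value of $\lgd$.} This is where the message adversary and the full assumption $n > 3\tb + 2\tm$ enter, and I expect it to be the main obstacle. The key structural fact to establish first: if \emph{any} correct process $p_i$ ever saves a signature for $(m,\sn,j)$, then by the forwarding rule every correct process that receives $p_i$'s \bundlem \imp and has not yet \mbr-delivered will itself sign and forward; chasing this, at least $c - \tm$ correct processes eventually sign $(m,\sn,j)$ (the adversary can block $p_i$'s broadcast from reaching $\tm$ of them, but the surviving $c-\tm$ signers each re-broadcast, and one must argue the adversary's per-message budget of $\tm$ cannot keep partitioning this growing set --- this is the delicate counting step, and it is exactly where $n - \tm > 2\tb + \tm$, i.e. $c - \tm > \tb + (\text{slack})$, is needed so that $c - \tm > \frac{n+\tb}{2}$). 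Once $c - \tm$ correct processes have signed, one of them collects $> \frac{n+\tb}{2}$ signatures (since $c - \tm > \frac{n+\tb}{2}$ under \mbrbassum), \mbr-delivers, and re-broadcasts the full quorum at line~\ref{SB-MBRB-bcast-quorum}; that quorum \imp reaches all but $\tm$ correct processes, each of which then has $> \frac{n+\tb}{2}$ saved signatures and \mbr-delivers --- giving \MBR-Global-delivery with $\lgd = c - \tm$. \MBR-Local-delivery is the special case where $p_i$ itself is the \mbr-broadcaster: it signs and broadcasts at lines~\ref{SB-MBRB-save-own-sig-init}--\ref{SB-MBRB-bcast}, so at least one correct process receives it (as $\tm < c$) and the above applies.

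\emph{Time and message complexity.} For $\rtc$: step~1 is the initial \broadcast; step~2 is the forwards at line~\ref{SB-MBRB-fwd}. If after step~2 some correct process has already amassed $> \frac{n+\tb}{2}$ signatures it delivers, and its quorum \broadcast gives delivery to $c - \tm$ correct processes ``within'' step~2's horizon in the sense made precise by the time-measurement convention --- I would count: $c - \tm$ correct processes each forward, each such forward reaches $\geq c - \tm$ correct processes, so some process sees $\geq$ roughly $c - \tm - \tm \cdot(\text{something})$ distinct signatures; setting this $> \frac{n+\tb}{2}$ and solving for $\tm$ yields the threshold $\tm < \frac{c - \lfloor\frac{n+\tb}{2}\rfloor}{\lfloor\frac{n+\tb}{2}\rfloor + 1}$, the 2-step regime. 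The 3-step regime ($\tm < c - \sqrt{c \cdot \frac{n+\tb}{2}}$) comes from iterating once more: after the step-2 forwards, enough pairwise signature-sets have accumulated that by step~3 some process crosses the threshold --- the $\sqrt{\cdot}$ arises from a quadratic in the two rounds of $\tm$-sized losses ($(c - \tm)^2 / c \gtrsim \frac{n+\tb}{2}$ type estimate). I would do this bookkeeping carefully but it is routine once the combinatorial lemma from the liveness part is in hand. Finally $\omc = 2n^2$: each correct process broadcasts a \bundlem \imp at most twice per $(m,\sn,j)$ instance --- once when it first signs/forwards (line~\ref{SB-MBRB-fwd}) and once when it reaches the quorum (line~\ref{SB-MBRB-bcast-quorum}; the line~\ref{SB-MBRB-bcast} broadcast by the original sender coincides with its line~\ref{SB-MBRB-fwd}-type emission) --- so at most $2n$ \broadcast invocations by correct processes, each sending $n$ \imps, for $2n^2$ total. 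The hardest part of the whole theorem is the growing-signer-set counting argument underlying both \MBR-Global-delivery and the time bounds; everything else is guard-chasing and quorum intersection.
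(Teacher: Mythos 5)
Your decomposition and most of your arguments coincide with the paper's: the safety properties via the guards and quorum intersection, liveness via the set $K$ of correct processes that ever receive a \bundlem \imp carrying the sender's signature (all of which sign and re-broadcast), the $(c-\tm)^2/c > \frac{n+\tb}{2}$ estimate for the three-step regime, and the $2n^2$ message count are essentially what the paper does. One remark on the step you flag as ``delicate counting'': the resolution is a one-shot averaging argument rather than an iterative growing-set argument. Every member of $K$ broadcasts its own signature, each such broadcast is received by at least $c-\tm$ correct processes, all of which belong to $K$ by definition of $K$; hence at least $|K|(c-\tm)$ signature receptions are spread over $|K|$ receivers, so some single member of $K$ receives at least $c-\tm$ \emph{distinct} signatures, and $c-\tm>\frac{n+\tb}{2}$ by Lemma~\ref{lem:sufficient-for-quorum}. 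Making this explicit matters because the adversary may suppress a \emph{different} set of \tm receivers for each broadcast, so no fixed process is guaranteed to receive any particular signature; ``$c-\tm$ processes have signed'' does not by itself put a quorum in any one process's hands. (Your validity argument via the quorum is more roundabout than necessary---the guard at line~\ref{SB-MBRB-cond-vld} already demands $p_j$'s own signature---but it is correct.)

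There is, however, a genuine gap in your two-step regime. You propose that some correct process amasses a quorum after two steps and that its quorum broadcast ``gives delivery to $c-\tm$ correct processes within step 2's horizon''; under the paper's timing convention that re-broadcast is a \emph{third} communication step, so what you describe is exactly the paper's three-step argument (Lemma~\ref{lem:dlv-3-rnd-if-cond}), not a two-step bound. The two-step bound requires the stronger fact that at least $c-\tm$ correct processes \emph{each independently} accumulate strictly more than $\frac{n+\tb}{2}$ signatures from the line~\ref{SB-MBRB-bcast} and line~\ref{SB-MBRB-fwd} \imps alone, without help from line~\ref{SB-MBRB-bcast-quorum}. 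The paper's Lemma~\ref{lem:amt-dlv-2-rnd} establishes this by partitioning the correct processes into the set $A$ of those that reach a quorum within two steps and its complement $B$, bounding the total number of receptions from above by $|A|\klem+|B|\big\lfloor\frac{n+\tb}{2}\big\rfloor$ and from below by $\klem(c-\tm)$, and minimizing over $\klem\in[c-\tm,c]$ to obtain $|A|\geq c-\tm-\big\lfloor\frac{\tm\lfloor\frac{n+\tb}{2}\rfloor}{c-\tm-\lfloor\frac{n+\tb}{2}\rfloor}\big\rfloor$. The threshold $\tm<\frac{c-\lfloor\frac{n+\tb}{2}\rfloor}{\lfloor\frac{n+\tb}{2}\rfloor+1}$ is precisely the condition under which the floor term vanishes, i.e.\ $|A|\geq c-\tm=\lgd$. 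Your proposed computation---``some process sees at least $c-\tm-\tm\cdot(\text{something})$ signatures, set this $>\frac{n+\tb}{2}$ and solve for \tm''---bounds the wrong quantity (one process's count rather than the number of processes reaching a quorum) and would not produce the stated threshold.
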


\noindent The proof follows from the next lemmas.


\begin{lemma}[\MBR-Validity]
If a correct process $p_i$ \mbr-delivers $m$ from a correct process $p_j$ with sequence number \sn, then $p_j$ has previously \mbr-broadcast $m$ with sequence number \sn.
\end{lemma}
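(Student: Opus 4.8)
The plan is to trace back the causal chain that leads a correct process $p_i$ to \mbr-deliver $m$ from correct $p_j$ with sequence number \sn, and show that the only way this chain can start is with $p_j$ itself invoking $\mbrbroadcast(m,\sn)$. The \mbr-delivery of $(m,\sn,j)$ by $p_i$ happens at line~\ref{SB-MBRB-dlv}, which is guarded by the condition at line~\ref{SB-MBRB-cond-dlv}: $p_i$ must have saved strictly more than $\frac{n+\tb}{2}$ signatures for the triplet $(m,\sn,j)$. Since $\frac{n+\tb}{2} \geq 0$, this set of saved signatures is nonempty; in fact, because signatures are only ever saved at lines~\ref{SB-MBRB-save-sigs}, \ref{SB-MBRB-save-own-sig-fwd} (and \ref{SB-MBRB-save-own-sig-init}), and the validity condition at line~\ref{SB-MBRB-cond-vld} requires that \sigs contain the valid signature for $(m,\sn,j)$ \emph{by $p_j$}, the set of saved signatures for $(m,\sn,j)$ at $p_i$ must in particular include $p_j$'s own signature on $(m,\sn,j)$.

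The second step is to use the unforgeability of the signature scheme. Since $p_j$ is correct, a valid signature by $p_j$ on the triplet $(m,\sn,j)$ can only have been produced by $p_j$ executing its own code. Inspecting the algorithm, the only lines at which $p_i$ (here instantiated as $p_j$) produces a signature on a triplet of the form $(\cdot,\cdot,j)$ are line~\ref{SB-MBRB-save-own-sig-init} inside the $\mbrbroadcast(m',\sn')$ operation (which signs $(m',\sn',j)$) and line~\ref{SB-MBRB-save-own-sig-fwd} inside the \received handler (which signs $(m',\sn',j')$ where $j'$ is the sender field of the received \bundlem, so for the signer to be $j$ we would need $j'=j$). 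But line~\ref{SB-MBRB-save-own-sig-fwd} is only reached when the validity check at line~\ref{SB-MBRB-cond-vld} already passed, which itself required a valid signature on $(m',\sn',j)$ by $p_j$ to be present in the received \sigs --- so that branch cannot be the \emph{origin} of $p_j$'s signature on a $j$-triplet without circularity. Hence the first time $p_j$ ever signs any triplet $(m',\sn',j)$, it must be via line~\ref{SB-MBRB-save-own-sig-init}, i.e., via an invocation of $\mbrbroadcast(m',\sn')$ by $p_j$.

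The third step combines these observations: the valid signature by $p_j$ on $(m,\sn,j)$ that $p_i$ holds must have been created by $p_j$, and by the argument above it was created at line~\ref{SB-MBRB-save-own-sig-init} during an invocation $\mbrbroadcast(m',\sn')$ with $(m',\sn') = (m,\sn)$ (the signed triplet determines $m$, \sn, and the signer $j$). Therefore $p_j$ has \mbr-broadcast $m$ with sequence number \sn. Finally, to get the ``previously'' part of the statement, I note that $p_i$ can only hold this signature after having \received it inside some \bundlem \imp, which in turn requires that \imp to have been \broadcast, which requires $p_j$ to have produced and \broadcast the signature first (at lines~\ref{SB-MBRB-save-own-sig-init}--\ref{SB-MBRB-bcast}); so $p_j$'s \mbr-broadcast causally precedes, and thus temporally precedes, $p_i$'s \mbr-delivery.

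The main obstacle --- though it is more of a bookkeeping subtlety than a genuine difficulty --- is ruling out the circular scenario in which $p_j$'s signature on $(m,\sn,j)$ appears to be ``bootstrapped'' through the forwarding branch at lines~\ref{SB-MBRB-cond-fwd}--\ref{SB-MBRB-save-own-sig-fwd}. This is handled cleanly by observing that that branch signs $(m,\sn,j)$ only when the sender field equals $j$, and reaching it presupposes an already-valid $p_j$-signature on $(m,\sn,j)$ in the incoming message; since $p_j$ is correct and signatures are unforgeable, the very first such signature in the whole execution must originate at line~\ref{SB-MBRB-save-own-sig-init}. A minor additional care point is the multi-shot setting: one must check that the signed triplet $(m,\sn,j)$ pins down both the \app and the sequence number, so that $p_j$'s invocation was specifically $\mbrbroadcast(m,\sn)$ and not some other $\mbrbroadcast(m',\sn')$; this is immediate since the triplet is exactly what gets signed.
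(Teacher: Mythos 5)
Your proof is correct and follows essentially the same route as the paper's: delivery forces the validity check at line~\ref{SB-MBRB-cond-vld} to have passed, hence $p_i$ holds a valid signature by $p_j$ on $(m,\sn,j)$, and unforgeability plus the correctness of $p_j$ pins its origin to line~\ref{SB-MBRB-save-own-sig-init} inside $\mbrbroadcast(m,\sn)$. Your extra care in ruling out the forwarding branch at line~\ref{SB-MBRB-save-own-sig-fwd} as a possible origin (via the non-circularity argument) is a detail the paper's one-line proof glosses over, but it is a refinement of the same argument, not a different one.
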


\begin{proof} 
If a correct process $p_i$ \mbr-delivers $(m,\sn,j)$ (where $p_j$ is correct) at line~\ref{SB-MBRB-dlv},
then it has passed the condition at line~\ref{SB-MBRB-cond-vld},
which means that it must have witnessed a valid signature for $(m,\sn,j)$ by $p_j$.
Since signatures are secure, the only way to create this signature is for $p_j$ to execute
the instruction at line~\ref{SB-MBRB-save-own-sig-init},
during the $\mbrbroadcast(m,\sn)$ invocation.
\end{proof}


\begin{lemma}[\MBR-No-duplication]
A correct process $p_i$ \mbr-delivers at most one 
\app from a process $p_j$ with a given sequence number \sn.
\end{lemma}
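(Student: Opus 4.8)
The plan is to argue directly from the structure of Algorithm~\ref{algo:sb-mbrb}, observing that the only place a correct process \mbr-delivers anything is line~\ref{SB-MBRB-dlv}, and that every execution path reaching that line must first pass the guard at line~\ref{SB-MBRB-cond-vld}, which explicitly requires that no \app has yet been \mbr-delivered for the pair $(\sn,j)$.

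First I would fix a correct process $p_i$, a sender identity $j$, and a sequence number \sn, and suppose for contradiction that $p_i$ \mbr-delivers two \apps for $(\sn,j)$ — say $(m,\sn,j)$ at some execution of line~\ref{SB-MBRB-dlv} and $(m',\sn,j)$ at a later one (the two deliveries are necessarily ordered in time because processes are sequential). Each of these executions of line~\ref{SB-MBRB-dlv} occurs inside the body of the {\bf when} block triggered by the reception of some \bundlem \imp, and is reached only after the condition at line~\ref{SB-MBRB-cond-vld} has evaluated to true in that same block execution.

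Next I would note that, after the first delivery of $(m,\sn,j)$ at line~\ref{SB-MBRB-dlv}, the local predicate ``$(-,\sn,j)$ already \mbr-delivered'' used in line~\ref{SB-MBRB-cond-vld} holds permanently at $p_i$ (nothing in the code ever ``un-delivers'' a pair, and $p_i$ executes its code sequentially). Hence, when $p_i$ later processes the \bundlem \imp that would lead to the second delivery, the conjunct ``$(-,\sn,j)$ not already \mbr-delivered'' at line~\ref{SB-MBRB-cond-vld} is false, the whole condition fails, and control jumps past line~\ref{SB-MBRB-end-cond-vld} without ever reaching line~\ref{SB-MBRB-dlv} — contradicting the assumed second delivery. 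Therefore $p_i$ \mbr-delivers at most one \app for $(\sn,j)$.

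This argument is essentially bookkeeping about control flow and the monotonicity of the ``already delivered'' flag, so there is no real obstacle; the only point needing a word of care is the appeal to sequentiality of correct processes, which guarantees that the two {\bf when}-block executions do not interleave and that the flag set by the first has taken effect before the second is evaluated.
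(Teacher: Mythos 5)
Your proof is correct and takes exactly the same route as the paper, which simply notes that the property ``derives trivially from the condition at line~\ref{SB-MBRB-cond-vld}''; you have just spelled out the control-flow and monotonicity details that the paper leaves implicit. No changes needed.
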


\begin{proof}
This property derives trivially from the condition at line~\ref{SB-MBRB-cond-vld}.
\end{proof}


\begin{lemma}[\MBR-No-duplicity]
No two different correct processes \mbr-deliver different 
\apps from a process $p_i$ with the same sequence number~\sn.
\end{lemma}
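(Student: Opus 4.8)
The plan is to derive a contradiction from the assumption that two correct processes $p_i$ and $p_{i'}$ \mbr-deliver two distinct \apps $m \neq m'$ from the same sender $p_k$ with the same sequence number \sn. Each \mbr-delivery happens at line~\ref{SB-MBRB-dlv}, and to reach it a process must have passed the quorum test at line~\ref{SB-MBRB-cond-dlv}: it must have saved strictly more than $\frac{n+\tb}{2}$ valid signatures for its triplet. So $p_i$ holds a set $S$ of more than $\frac{n+\tb}{2}$ signers endorsing $(m,\sn,k)$, and $p_{i'}$ holds a set $S'$ of more than $\frac{n+\tb}{2}$ signers endorsing $(m',\sn,k)$.

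The key step is a counting argument on $|S \cap S'|$. Since $|S| + |S'| > n + \tb$ and both are subsets of the $n$ processes, we get $|S \cap S'| > \tb$, hence $S \cap S'$ contains at least one correct process, say $p_\ell$. That correct process would have produced (and saved) a signature for $(m,\sn,k)$ and also a signature for $(m',\sn,k)$ with $m \neq m'$. First I would rule this out by invoking the structure of the code: a correct process signs a triplet for sequence number \sn and sender $k$ at most once — either at line~\ref{SB-MBRB-save-own-sig-init} (only if it is itself the sender, which $p_\ell$ is not, or this is a different invocation) or at line~\ref{SB-MBRB-save-own-sig-fwd}, which is guarded by the test at line~\ref{SB-MBRB-cond-fwd} ensuring $(-,\sn,k)$ has not already been signed by $p_\ell$. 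Combined with the stipulation in the model that each process produces at most one signature per \imp, a correct $p_\ell$ cannot have signed both $(m,\sn,k)$ and $(m',\sn,k)$. This contradiction establishes the lemma.

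The main obstacle is being careful about \emph{whose} signatures populate $S$ and $S'$: the sets are the signers as perceived by $p_i$ and $p_{i'}$ respectively, and one must check that a "saved valid signature for $(m,\sn,k)$ by $p_\ell$" genuinely implies $p_\ell$ executed the corresponding signing line (this is where signature unforgeability is used, as in the \MBR-Validity proof), so that a Byzantine process cannot fabricate $p_\ell$'s endorsement of a triplet $p_\ell$ never signed. Once that link between "saved signature by $p_\ell$" and "$p_\ell$ actually signed" is pinned down for correct $p_\ell$, the quorum-intersection count does the rest; note that the argument only needs $|S|,|S'| > \frac{n+\tb}{2}$ and does not require the full \mbrbassum $n > 3\tb + 2\tm$.
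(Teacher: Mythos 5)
Your proposal is correct and follows essentially the same route as the paper's proof: a quorum-intersection count showing the two signature sets share more than $\tb$ processes, hence a common correct signer, which contradicts the guard at line~\ref{SB-MBRB-cond-fwd} (and line~\ref{SB-MBRB-cond-vld}/\ref{SB-MBRB-save-own-sig-init}) preventing a correct process from signing two different \apps with the same sender and sequence number. Your additional remarks—that unforgeability is needed to tie a saved signature of a correct $p_\ell$ to an actual signing step by $p_\ell$, and that the argument does not need the full \mbrbassum—are accurate refinements the paper leaves implicit.
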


\begin{proof}
Let us consider two correct processes $p_a$ and $p_b$ which respectively \mbr-deliver $(m,\sn,i)$ and $(m',\sn,i)$.
Due to the condition at line~\ref{SB-MBRB-cond-dlv},
$p_a$ and $p_b$ must have saved (and thus received) two sets $Q_a$ and $Q_b$ containing
strictly more than $\frac{n+\tb}{2}$ signatures for $(m,\sn,i)$ and $(m',\sn,i)$, respectively.
We thus have $|Q_a| > \frac{n+\tb}{2}$ and $|Q_b| > \frac{n+\tb}{2}$.

As we have $|A \cap B| = |A|+|B|-|A \cup B| \geq |A|+|B|-n > 2\times\frac{n+\tb}{2}-n = \tb$,
$A$ and $B$ have at least one correct process $p_k$ in common,
which must have signed both $(m,\sn,i)$ and $(m',\sn,i)$.
But before signing $(m,\sn,i)$ at line~\ref{SB-MBRB-save-own-sig-init} or~\ref{SB-MBRB-save-own-sig-fwd},
$p_k$ checks that it did not sign a different \app from the same sender and with the same sequence number,
whether it be implicitly during a $\brbroadcast(m,\sn)$ invocation or at line~\ref{SB-MBRB-cond-fwd}.
Thereby, $m$ is necessarily equal to $m'$.
\end{proof}


\begin{lemma}[\MBR-Local-delivery]
If a correct process $p_i$ \mbr-broadcasts an \app $m$ with
sequence number \sn, then at least one correct process $p_j$
\mbr-delivers $m$ from $p_i$ with sequence number \sn.
\end{lemma}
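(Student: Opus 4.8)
The plan is to track a single \bundlem \imp from the \mbr-broadcaster $p_i$ as it traverses the network, and argue that the signatures it carries grow monotonically at each correct process that relays it, until the quorum threshold is crossed. First I would note that when $p_i$ invokes $\mbrbroadcast(m,\sn)$, it executes line~\ref{SB-MBRB-bcast} and broadcasts a \bundlem \imp carrying $p_i$'s own valid signature for $(m,\sn,i)$. By the message-adversary definition, at most \tm correct processes miss this \imp, so at least $c-\tm$ correct processes receive it. Since $n > 3\tb + 2\tm$ implies $c - \tm \geq n - \tb - \tm > 2\tb + \tm \geq 1$, at least one correct process receives it (in fact many do). Any correct process receiving it either has already \mbr-delivered some $(-,\sn,i)$ — in which case by \MBR-Validity (applied to the correct sender $p_i$) and \MBR-No-duplicity that delivered \app is $m$ itself, and we are done — or it passes the check at line~\ref{SB-MBRB-cond-vld} (the signature by $p_i$ is present and valid), saves the signature, and if it has not yet signed, produces its own signature and re-broadcasts at line~\ref{SB-MBRB-fwd}.

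The core of the argument is then a propagation/counting claim: I would show that the set of correct processes that ever sign $(m,\sn,i)$ keeps growing until it reaches size strictly more than $\frac{n+\tb}{2}$. Suppose for contradiction that no correct process ever \mbr-delivers $(m,\sn,i)$. Consider the set $S$ of correct processes that eventually save a signature for $(m,\sn,i)$; $S$ is non-empty as shown above. Every process in $S$, upon first saving such a signature, passes line~\ref{SB-MBRB-cond-fwd} (it has not \mbr-delivered, hence by the No-duplicity-style reasoning it has not signed a conflicting \app; and if it had already signed $(m,\sn,i)$ it did so via line~\ref{SB-MBRB-save-own-sig-fwd} on a prior receipt, which also triggered a broadcast), so each process in $S$ broadcasts a \bundlem \imp for $(m,\sn,i)$ containing its own signature. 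Each such broadcast reaches at least $c - \tm$ correct processes, all of which then save that signature (again using that none has \mbr-delivered). Hence every correct process that receives \emph{any} of these $|S|$ broadcasts ends up in $S$. The key quantitative step: the union of the recipient sets of the $|S|$ broadcasts has size at least $c - \tm$ (each individual one does), so $|S| \geq c - \tm > 2\tb + \tm \geq \tb$, and more importantly I want $|S| \cdot$ (nothing) — rather, I argue that since $n > 3\tb + 2\tm$ we get $c - \tm \geq n - \tb - \tm > 2\tb + \tm + (\tb + \tm) $... the clean bound is $c - \tm > \frac{n+\tb}{2}$: indeed $2(c-\tm) \geq 2(n - \tb - \tm) = n + (n - 2\tb - 2\tm) > n + \tb$ by the \mbrbassum. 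So once all processes in $S$ have signed and all their broadcasts have been received by the guaranteed $c - \tm$ correct processes, those $c - \tm > \frac{n+\tb}{2}$ processes each hold the signatures of all of $S$; but then such a process holds more than $\frac{n+\tb}{2}$ signatures and passes line~\ref{SB-MBRB-cond-dlv}, contradicting the assumption that nobody \mbr-delivers.

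To make this rigorous I would formalize it as: let $S_\infty$ be the (finite, hence stabilizing) set of correct processes that ever sign $(m,\sn,i)$; show $S_\infty \neq \emptyset$; show that each $p_k \in S_\infty$ broadcasts its signature; show that the broadcast by the \emph{last} such process to sign (or simply, eventually, by message-adversary reliability) is received by at least $c - \tm$ correct processes, each of which must then also have received, or will receive and save, the signatures of every earlier member of $S_\infty$ — actually the cleanest route is to observe that \emph{every} correct process in the recipient set of $p_i$'s original broadcast signs, giving $|S_\infty| \geq c - \tm$, and then that each member's broadcast delivers its signature to a common core of $\geq c - \tm > \frac{n+\tb}{2}$ correct processes (via a union-bound / pigeonhole argument on the at-most-\tm omissions per broadcast — note the omissions can differ per broadcast, so I need that \emph{some} correct process receives \emph{all} of them; this holds because the set of processes missing broadcast $b$ has size $\le \tm$, but I actually only need one process receiving enough of them, which follows since any single broadcast already reaches $c-\tm > \frac{n+\tb}{2}$ processes and all of those are in $S_\infty$, so any one of them eventually holds $\ge |S_\infty| \ge c - \tm$ signatures). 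Then line~\ref{SB-MBRB-cond-dlv} fires.

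The main obstacle I anticipate is the subtlety that the \tm omitted recipients may be chosen adversarially and differently for each relayed \bundlem \imp, so one cannot naively intersect recipient sets; the fix is to avoid needing an intersection at all by noting that a single broadcast (e.g.\ $p_i$'s) already reaches $c - \tm > \frac{n+\tb}{2}$ correct processes, all of which then sign, and asynchrony guarantees each of them eventually receives the relayed signatures of all $\geq c - \tm$ signers — there is no termination pressure since we are only claiming eventual delivery, not a time bound. A secondary point requiring care is the case where some correct process has \emph{already} \mbr-delivered a (necessarily equal, by \MBR-No-duplicity and the correctness of $p_i$) \app with the same $(\sn, i)$ before receiving $p_i$'s broadcast; this case only helps us and should be dispatched at the start.
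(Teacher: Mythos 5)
Your overall architecture matches the paper's: identify the set of correct processes that sign $(m,\sn,i)$, show it has at least $c-\tm$ members, observe that the \mbrbassum gives $c-\tm>\frac{n+\tb}{2}$, and conclude that some correct process accumulates a quorum. The early reductions (the already-\mbr-delivered case, the non-emptiness of the signer set) are fine. However, the decisive step---that strictly more than $\frac{n+\tb}{2}$ \emph{distinct} signatures end up saved at a \emph{single} correct process---is not established, and the justifications you offer for it are incorrect. You write that the $c-\tm$ guaranteed recipients ``each hold the signatures of all of $S$'' and, later, that ``asynchrony guarantees each of them eventually receives the relayed signatures of all $\geq c-\tm$ signers.'' Neither holds: the message adversary does not merely delay \imps, it permanently suppresses up to \tm copies of \emph{each} \broadcast, and it may choose a different set of \tm victims for each relayed \bundlem \imp. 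In particular, for every signer $p_k$ it can suppress the copy of $p_k$'s \bundlem \imp addressed to some fixed $p_j$, so that $p_j$ never holds more than two signatures even though $p_j\in S_\infty$. Membership in $S_\infty$ means a process has \emph{produced and broadcast} its own signature; it says nothing about which signatures that process \emph{receives}. You flag exactly this intersection problem yourself, but the fix you propose dissolves it by appeal to eventual delivery, which this adversary model does not grant.

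The missing ingredient is a counting (pigeonhole) argument, which is how the paper closes the gap. Let $K$ be the set of correct processes that ever receive a \bundlem \imp for $(m,\sn,i)$ carrying $\sig_i$; every such process signs and broadcasts a \bundlem \imp that again carries $\sig_i$, so---crucially---every correct recipient of any of these broadcasts is itself in $K$. Each of the $|K|$ broadcasts deposits its sender's signature at $\geq c-\tm$ correct processes, all in $K$, so at least $|K|(c-\tm)$ (signer, receiver) incidences fall on only $|K|$ receivers; by averaging, some member of $K$ receives the distinct signatures of at least $c-\tm>\frac{n+\tb}{2}$ signers and therefore passes the test at line~\ref{SB-MBRB-cond-dlv}. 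Averaging over all $c$ correct processes instead of over $K$ would only yield $(c-\tm)^2/c$ signatures at some process, which exceeds $\frac{n+\tb}{2}$ only under the stronger hypothesis of Lemma~\ref{lem:dlv-3-rnd-if-cond}; the self-referential choice of $K$ as both the set of signers and the set of counted receivers is what makes the bound $c-\tm$ attainable. You began to set up exactly such a product ($|S|\cdot\ldots$) and abandoned it; carrying that computation through is what your proof needs.
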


\begin{proof}
If a correct process $p_i$ \mbr-broadcasts $(m,\sn)$,
then it broadcasts its own signature $\sig_i$ for $(m,\sn,i)$
in a $\bundlem(m,\sn,i,\{\sig_i\})$ message at line~\ref{SB-MBRB-bcast}.
As $p_i$ is correct, it does not sign another triplet $(m',\sn,i)$ where $m' \neq m$,
therefore it is impossible for a correct process to \mbr-deliver $(m',\sn,i)$ at
line~\ref{SB-MBRB-dlv}, because it cannot pass the condition at line~\ref{SB-MBRB-cond-vld}.

Let us denote by $K$ the set of correct processes that receive a message $\bundlem(m,\sn,i,\{\sig_i,...\})$ at least once.
The first one of such \bundlem messages that a process of $K$ receives can be the one $p_i$ initially broadcast at line~\ref{SB-MBRB-bcast}, but it can also be a \bundlem message broadcast by a correct process at lines~\ref{SB-MBRB-fwd} or~\ref{SB-MBRB-bcast-quorum}, or it can even be a \bundlem message sent by a Byzantine process.
In any case, the first time the processes of $K$ receive such a \bundlem message, they pass the condition at line~\ref{SB-MBRB-cond-vld}, and they also pass the condition at line~\ref{SB-MBRB-cond-fwd}, except for $p_i$ if it belongs to $K$.
Consequently, each process $p_k$ of $K$ necessarily broadcasts its own signature $\sig_k$ for $(m,\sn,i)$ in a $\bundlem(m,\sn,i,\{\sig_k,\sig_i,...\})$ message.

\sloppy
By construction of the algorithm, the set $K$ of correct processes that receive a $\bundlem(m,\sn,i,\{\sig_i,...\})$ message is equal to the set of correct processes $p_k$ that broadcast a $\bundlem(m,\sn,i,\{\sig_k,\sig_i,...\})$.
By the definition of the message adversary, a message $\bundlem(m,\sn,i,\{\sig_k,\sig_i,...\})$ broadcast by a correct process $p_k$ is eventually received by at least $c-\tm$ correct processes.
Hence, the minimum number of signatures for $(m,\sn,i)$ made by processes of $K$ that is also received by processes of $K$ globally is $|K|(c-\tm)$.
It follows that a given process of $K$ individually receives on average the distinct signatures of at least $|K|(c-\tm)/|K| = c-\tm$ processes of $K$.

From \mbrbassum, we have $3\tb+2\tm < n \iff n+3\tb+2\tm < 2n \iff n+\tb < 2n-2\tb-2\tm \iff \frac{n+\tb}{2} < n-\tb-\tm \leq c-\tm$ (as $n-\tb \leq c$).
As a result, at least one process $p_j$ of $K$ (ergo one correct process) receives a set $S$ (in possibly multiple \bundlem messages) of strictly more than $\frac{n+\tb}{2}$ valid distinct signatures for $(m,\sn,i)$.
When $p_j$ receives the last signature of $S$, there are two cases:
\begin{itemize}
    \item Case if $p_j$ does not pass the condition at line~\ref{SB-MBRB-cond-vld}.
    
    As processes of $K$ are correct, then when they broadcast a
    $\bundlem(m,\sn,i,\sigs)$ message, they necessarily include $\sig_i$ in \sigs,
    which implies that $\sig_i$ is necessarily in $S$.
    Therefore, if $p_j$ does not pass the condition at line~\ref{SB-MBRB-cond-vld},
    it is because $p_j$ already \mbr-delivered some $(-,\sn,i)$.
    But let us remind that, as $p_i$ is correct, it is impossible for $p_j$
    to \mbr-deliver anything different from $(m,\sn,i)$.
    Therefore, $p_j$ has already \mbr-delivered $(m,\sn,i)$.
    
    \item Case if $p_j$ passes the condition at line~\ref{SB-MBRB-cond-vld}.
    
    Process $p_j$ then saves all signatures of $S$ at line~\ref{SB-MBRB-save-sigs},
    and after it passes the condition at line~\ref{SB-MBRB-cond-dlv} (as
    $|S| > \frac{n+\tb}{2}$) and finally \mbr-delivers
    $(m,\sn,i)$ at line~\ref{SB-MBRB-dlv}. \qedhere
\end{itemize}
\end{proof}


\begin{lemma}[\MBR-Global-delivery]
If a correct process $p_i$ \mbr-delivers an \app $m$ 
from $p_j$ with sequence number \sn, then at least $\lgd = c-\tm$ correct
processes \mbr-deliver $m$ from $p_j$ with sequence number \sn.
\end{lemma}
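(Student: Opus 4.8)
The plan is to leverage the fact that the very act of \mbr-delivering forces a process to disseminate a quorum certificate. First I would note that if a correct process $p_i$ \mbr-delivers $(m,\sn,j)$ at line~\ref{SB-MBRB-dlv}, it reaches that line only through the guard at line~\ref{SB-MBRB-cond-dlv}, and therefore only after having executed the \broadcast at line~\ref{SB-MBRB-bcast-quorum} of a \bundlem \imp carrying the set $Q$ of all signatures it currently holds for $(m,\sn,j)$. By that guard, $|Q| > \frac{n+\tb}{2}$, and every signature in $Q$ is valid (a correct process saves only valid signatures, at line~\ref{SB-MBRB-save-sigs}). Crucially, $Q$ also contains $p_j$'s own signature for $(m,\sn,j)$: to have passed the validity test at line~\ref{SB-MBRB-cond-vld} for $(m,\sn,j)$ at least once, $p_i$ must have seen that signature and hence saved it at line~\ref{SB-MBRB-save-sigs}.

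Next I would apply the message-adversary definition to this single \broadcast by the correct process $p_i$: at least $c-\tm$ correct processes eventually receive the \bundlem \imp carrying $Q$. I then show that each such correct process $p_k$ \mbr-delivers $(m,\sn,j)$, by a case split on the guard at line~\ref{SB-MBRB-cond-vld} at the moment $p_k$ handles this \imp. If $p_k$ has already \mbr-delivered some $(-,\sn,j)$, then since $p_i$ \mbr-delivered $(m,\sn,j)$, \MBR-No-duplicity forces that earlier delivery to have been of $(m,\sn,j)$ itself. Otherwise $p_k$ passes line~\ref{SB-MBRB-cond-vld} (the \imp does carry $p_j$'s valid signature for $(m,\sn,j)$), saves all of $Q$ at line~\ref{SB-MBRB-save-sigs}, thereby holds strictly more than $\frac{n+\tb}{2}$ signatures for $(m,\sn,j)$, passes line~\ref{SB-MBRB-cond-dlv}, and \mbr-delivers $(m,\sn,j)$ at line~\ref{SB-MBRB-dlv}. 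In both cases $p_k$ \mbr-delivers $(m,\sn,j)$, so at least $\lgd = c-\tm$ correct processes do.

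The two points I expect to need the most care are: (i) justifying that the quorum \bundlem \imp broadcast at line~\ref{SB-MBRB-bcast-quorum} really carries $p_j$'s signature, so that a process seeing $(m,\sn,j)$ for the first time can still clear line~\ref{SB-MBRB-cond-vld} even when $p_j$ is Byzantine --- this is inherited precisely from $p_i$ having cleared that guard earlier; and (ii) the appeal to \MBR-No-duplicity to close the ``already \mbr-delivered'' branch. No counting beyond the message-adversary bound is required, and, unlike in \MBR-Local-delivery, no cascading of re-broadcasts is needed here: $p_i$'s single final \broadcast already reaches the $c-\tm$ correct processes that must \mbr-deliver.
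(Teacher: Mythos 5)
Your proposal is correct and follows essentially the same route as the paper's own proof: the quorum of more than $\frac{n+\tb}{2}$ valid signatures broadcast at line~\ref{SB-MBRB-bcast-quorum} reaches at least $c-\tm$ correct processes by the message-adversary bound, and each of them either has already \mbr-delivered $(m,\sn,j)$ (via \MBR-No-duplicity) or delivers upon saving the received quorum. Your explicit justification that the forwarded \bundlem \imp carries $p_j$'s own signature (so receivers can clear line~\ref{SB-MBRB-cond-vld}) is a point the paper also relies on, stated there slightly more tersely.
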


\begin{proof}
If a correct process $p_i$ \mbr-delivers $(m,\sn,j)$ at line~\ref{SB-MBRB-dlv}, it must have saved a set \sigs of strictly more than $\frac{n+\tb}{2}$ valid distinct signatures because of the condition at line~\ref{SB-MBRB-cond-dlv}.
Let us remark that \sigs necessarily contains the signature for $(m,\sn,i)$ by $p_i$ because of the condition at line~\ref{SB-MBRB-cond-vld}.
Additionally, $p_i$ must also have broadcast  $\bundlem(m,\sn,i,\sigs)$ at line~\ref{SB-MBRB-bcast-quorum}, that, by definition of the message adversary, is received by a set $K$ of at least $c-\tm$ correct processes.
For each process $p_k$ of $K$:
\begin{itemize}
    \item If $p_k$ does not pass the condition at line~\ref{SB-MBRB-cond-vld}, it is necessarily because it has already \mbr-delivered some $(-,\sn,j)$ at line~\ref{SB-MBRB-dlv}.
    But because of \MBR-No-duplicity, $p_k$ has necessarily \mbr-delivered $(m,\sn,j)$.
    
    \item If $p_k$ passes the condition at line~\ref{SB-MBRB-cond-vld}, then it saves all signatures of \sigs at line~\ref{SB-MBRB-save-sigs} and then passes the condition at line~\ref{SB-MBRB-cond-dlv} and finally \mbr-delivers $(m,\sn,j)$ at line~\ref{SB-MBRB-dlv}.
\end{itemize}

Therefore, all processes of $K$ (which, as a reminder, are at least $c-\tm=\lgd$) necessarily \mbr-deliver $(m,\sn,j)$ at line~\ref{SB-MBRB-dlv}.
\end{proof}


\begin{lemma} \label{lem:sufficient-for-quorum}
$c-\tm > \big\lfloor \frac{n+\tb}{2} \big\rfloor$.
\end{lemma}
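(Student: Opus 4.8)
The plan is to derive the claim directly from the \mbrbassum by an elementary algebraic manipulation, essentially reusing the chain of equivalences already established in the proof of \MBR-Local-delivery, and then to handle the floor with a one-line observation.

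First I would start from \mbrbassum, $n > 3\tb + 2\tm$, and rewrite it step by step: $n > 3\tb+2\tm \iff n+\tb > 4\tb+2\tm \iff \frac{n+\tb}{2} > 2\tb+\tm$, or more usefully $\frac{n+\tb}{2} < n-\tb-\tm$ (the form used earlier). Next I would invoke the standing bound $c \geq n-\tb$, which gives $n-\tb-\tm \leq c-\tm$, and chain the two inequalities to obtain $\frac{n+\tb}{2} < c-\tm$.

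The final step is to replace $\frac{n+\tb}{2}$ by its floor. Since $\tb$ and $\tm$ are integers, $c-\tm$ is an integer, and $\big\lfloor\frac{n+\tb}{2}\big\rfloor \leq \frac{n+\tb}{2} < c-\tm$; because the leftmost quantity is an integer strictly below the integer $c-\tm$, we actually get $\big\lfloor\frac{n+\tb}{2}\big\rfloor < c-\tm$, i.e.\ $c-\tm > \big\lfloor\frac{n+\tb}{2}\big\rfloor$, which is the statement.

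There is no real obstacle here: the only point requiring a moment of care is checking that passing from $\frac{n+\tb}{2}$ to $\big\lfloor\frac{n+\tb}{2}\big\rfloor$ does not break the strict inequality, which is immediate from $\lfloor x\rfloor \le x$. So the proof will be only a few lines long, and it mainly serves as a bookkeeping lemma guaranteeing that the quorum threshold $\lfloor\frac{n+\tb}{2}\rfloor$ used at line~\ref{SB-MBRB-cond-dlv} is strictly below $\lgd = c-\tm$, which is needed in the \MBR-Time-cost analysis.
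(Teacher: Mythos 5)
Your proof is correct and follows essentially the same route as the paper's: both reduce to the chain $\big\lfloor\frac{n+\tb}{2}\big\rfloor \leq \frac{n+\tb}{2} < n-\tb-\tm \leq c-\tm$, using the \mbrbassum and $c \geq n-\tb$, with the floor handled by $\lfloor x\rfloor \leq x$ (your extra integrality remark is harmless but unnecessary, since the strict inequality already survives the floor).
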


\begin{proof}
We have the following:
\begin{align*}
    c-\tm &\geq n-\tb-\tm
    = \frac{2n-2\tb-2\tm}{2},\tag{by definition of $c$}\\
    &> \frac{n+3\tb+2\tm-2\tb-2\tm}{2}, \tag{by \mbrbassum}\\
    &> \frac{n+\tb}{2} \geq \Big\lfloor \frac{n+\tb}{2} \Big\rfloor.\qedhere
\end{align*}
\end{proof}

\begin{lemma} \label{lem:amt-dlv-2-rnd}
If a correct process $p_i$ \mbr-broadcasts $(m,\sn)$, then at least $c-\tm-\Big\lfloor\frac{\tm\lfloor\frac{n+\tb}{2}\rfloor}{c-\tm-\lfloor\frac{n+\tb}{2}\rfloor}\Big\rfloor$ correct processes \mbr-deliver $(m,\sn,i)$ at most two communication steps later.
\end{lemma}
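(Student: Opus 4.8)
The goal is to count, after two communication steps, how many correct processes are guaranteed to have saved a quorum of strictly more than $\lfloor\frac{n+\tb}{2}\rfloor$ signatures for $(m,\sn,i)$, hence to have \mbr-delivered. The natural approach mirrors the counting argument already used in the \MBR-Local-delivery proof, but now carried out quantitatively and stopped after exactly two steps. First I would fix the starting point: when the correct sender $p_i$ invokes $\mbrbroadcast(m,\sn)$, it \broadcasts $\bundlem(m,\sn,i,\{\sig_i\})$ (line~\ref{SB-MBRB-bcast}); by the message adversary, a set $K_1$ of at least $c-\tm$ correct processes receive this \imp within one step. Each such process passes the condition at line~\ref{SB-MBRB-cond-vld} (no \app for $(-,\sn,i)$ can yet have been \mbr-delivered, since $p_i$ is correct and only signed $m$), passes line~\ref{SB-MBRB-cond-fwd} (except $p_i$ itself), and thus \broadcasts a $\bundlem(m,\sn,i,\cdot)$ \imp carrying its own signature at line~\ref{SB-MBRB-fwd}. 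So after one step we have at least $c-\tm$ correct processes (including $p_i$) each of which has broadcast a \bundlem \imp containing its own signature plus $\sig_i$.

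**Second step — the double-counting.** Let $K$ be this set, $|K|\ge c-\tm$. In the second communication step, each \imp broadcast by a member of $K$ reaches at least $c-\tm$ correct processes, but I only care about how many reach members of $K$: it is at least $|K|-\tm$ per sender (the adversary can drop at most $\tm$ copies, and in the worst case all dropped copies go to members of $K$). Counting (signature, recipient) incidences over $K\times K$: the total number of distinct-signature receptions landing inside $K$ is at least $|K|(|K|-\tm)$. Hence some process $p_j\in K$ receives the signatures of at least $|K|-\tm$ distinct members of $K$ by the end of step two; together with $\sig_i$ this is a set of at least $|K|-\tm+[\,i\notin K\,]$ distinct signatures — but since $i\in K$ the clean bound is just that $p_j$ holds at least $|K|-\tm$ signatures. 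By Lemma~\ref{lem:sufficient-for-quorum}, $c-\tm>\lfloor\frac{n+\tb}{2}\rfloor$, so I need $|K|-\tm$ large enough; but $|K|$ could be as small as $c-\tm$, giving $c-2\tm$, which is not obviously above the quorum. This is exactly where the averaging has to be sharpened rather than taking a single $p_j$.

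**The sharpened averaging.** The real content is: \emph{how many} members of $K$ reach the quorum, not just one. Let $q=\lfloor\frac{n+\tb}{2}\rfloor$ and suppose for contradiction that at most $q$ members of $K$ fail to reach the quorum — no, rather: let $x$ be the number of members of $K$ that hold strictly fewer than $q+1$ distinct signatures (hence at most $q$) after step two. Each such "deficient" process received at most $q$ distinct signatures from $K$; each non-deficient one received at most $|K|$. Comparing with the lower bound $|K|(|K|-\tm)$ on total incidences inside $K$ gives $|K|(|K|-\tm)\le x\,q+(|K|-x)|K|$, i.e. $x(|K|-q)\le |K|\bigl(|K|-(|K|-\tm)\bigr)=|K|\,\tm$, so $x\le \frac{|K|\,\tm}{|K|-q}$. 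The number of processes reaching the quorum — hence \mbr-delivering within two steps — is then at least $|K|-x\ge |K|-\frac{|K|\,\tm}{|K|-q}$. The function $|K|\mapsto |K|-\frac{|K|\,\tm}{|K|-q}$ is increasing in $|K|$ for $|K|>q$ (its derivative is $1-\frac{\tm q}{(|K|-q)^2}$... actually one should just check monotonicity directly, or note $|K|-x = (|K|-q)-\frac{q\tm}{|K|-q}+q+\ldots$), so plugging in the worst case $|K|=c-\tm$ yields the claimed bound $c-\tm-\big\lfloor\frac{\tm q}{c-\tm-q}\big\rfloor$, the floor coming from $x$ being an integer $\le\frac{(c-\tm)\tm}{c-\tm-q}$ together with $c-\tm-q\ge 1$ so that the denominator is valid.

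**Main obstacle.** The delicate points are (i) justifying that the count of incidences \emph{inside} $K$ is at least $|K|(|K|-\tm)$ rather than something smaller — one must be careful that the adversary budget is \emph{per broadcast}, so each of the $|K|$ broadcasts independently loses at most $\tm$ copies, and that "distinct signatures" is the right thing to count (a process receiving the same signature twice counts once, which only helps the adversary, so the bound stands); (ii) the monotonicity of $|K|-\frac{|K|\tm}{|K|-q}$ in $|K|$, ensuring $|K|=c-\tm$ is genuinely the worst case given $c-\tm\le|K|\le c$; and (iii) correctly handling the floor and the edge behaviour when $\tm=0$ (then $x=0$ and all $c$ correct processes... actually all $\ge c-\tm=c$ of $K$ deliver, matching $\lgd=c$). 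I would also need to double check that a process reaching the quorum in step two actually \emph{does} \mbr-deliver — it passes line~\ref{SB-MBRB-cond-vld} unless it already \mbr-delivered, but by \MBR-No-duplicity and $p_i$ correct that prior delivery must be $(m,\sn,i)$ anyway, so either way it has \mbr-delivered $(m,\sn,i)$ within two steps. The arithmetic simplifications ($\frac{n+\tb}{2}<c-\tm$, etc.) are routine given \mbrbassum and Lemma~\ref{lem:sufficient-for-quorum}.
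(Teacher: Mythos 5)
Your overall strategy---double-count signature receptions after two steps, split processes into those that have reached the quorum and those that have not, and optimize over the worst-case size of $K$---is the same as the paper's. However, there is a genuine quantitative gap: you restrict the double-counting to recipients \emph{inside} $K$. Per broadcast by a member of $K$ you then only credit $|K|-\tm$ receptions, for a total of $|K|(|K|-\tm)$ incidences, and your deficient/non-deficient partition ranges only over the $|K|$ members of $K$. This yields $x\le\frac{|K|\,\tm}{|K|-q}$ and hence at least $|K|-\frac{|K|\,\tm}{|K|-q}$ deliverers; at the worst case $|K|=c-\tm$ this equals $c-\tm-\frac{(c-\tm)\tm}{c-\tm-q}=c-2\tm-\frac{q\tm}{c-\tm-q}$, since $\frac{(c-\tm)\tm}{c-\tm-q}=\tm+\frac{q\tm}{c-\tm-q}$. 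Your closing claim that this ``yields the claimed bound $c-\tm-\lfloor\frac{\tm q}{c-\tm-q}\rfloor$'' is therefore an arithmetic slip: your bound falls short of the lemma's by exactly $\tm$ whenever $\tm>0$, so the proof as written establishes only a strictly weaker statement.

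The fix is the one the paper uses: the population of potential deliverers is the set of \emph{all} $c$ correct processes, not just $K$. A correct process that missed $p_i$'s initial \bundlem \imp can still accumulate a quorum from the second-step forwards of members of $K$ (each of which carries $\sig_i$, so the condition at line~\ref{SB-MBRB-cond-vld} passes). Counting receptions by all correct processes gives the total $|K|(c-\tm)$ rather than $|K|(|K|-\tm)$, and the partition into $A$ (quorum reached) and $B$ (not reached) satisfies $|A|+|B|=c$ with per-process caps $|K|$ and $q$ respectively. This gives $|A|\,|K|+(c-|A|)\,q\ge |K|(c-\tm)$, i.e. $|A|\ge\frac{|K|(c-\tm)-cq}{|K|-q}$, whose minimum over $|K|\in[c-\tm,c]$ (the function is nondecreasing, as the paper checks via its derivative $\frac{q\tm}{(|K|-q)^2}\ge 0$) is $c-\tm-\frac{q\tm}{c-\tm-q}$, exactly the claimed bound after rounding. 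The rest of your argument---the setup of $K$, the per-broadcast adversary budget, the monotonicity check, and the case analysis showing that a process holding a quorum has indeed \mbr-delivered $(m,\sn,i)$---matches the paper and is fine.
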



\begin{proof}
If a correct process $p_i$ \mbr-broadcasts $(m,\sn)$, then it broadcasts its own signature $\sig_i$ for $(m,\sn,i)$ in a $\bundlem(m,\sn,i,\{\sig_i\})$ \imp at line~\ref{SB-MBRB-bcast}.
Let us denote by $K$ the set of correct processes that receive this $\bundlem(m,\sn,i,\{\sig_i\})$ \imp from $p_i$ during the same communication step, and let \klem be the number of processes in $K$, such that $c-\tm \leq \klem = |K| \leq c$ (by definition of the message adversary).
By construction of the algorithm, every process $p_x$ of $K$ passes the condition at line~\ref{SB-MBRB-cond-vld}, and therefore broadcasts a $\bundlem(m,\sn,i,\{\sig_x,\sig_i\})$ \imp, whether it be at line~\ref{SB-MBRB-bcast} for $p_i$, or at line~\ref{SB-MBRB-fwd} for any other process of $K$.

Let $A$ and $B$ define two partitions of the set of all correct processes
($A \cup B$ is the set of all correct processes, and $A \cap B = \varnothing$).
$A$ denotes the set of correct processes that receive strictly more than
$\frac{n+\tb}{2}$ signatures for $(m,\sn,i)$ from processes of $K$ two communication steps after $p_i$ \mbr-broadcast $(m,\sn)$,
while $B$ denotes the set of remaining correct processes
of $K$ that receive at most $\frac{n+\tb}{2}$ signatures for $(m,\sn,i)$
from processes of $K$ two communication steps
after $p_i$ \mbr-broadcast $(m,\sn)$.
Let \llem be the size of $A$: $\llem = |A|$.
By construction, $|B|=c-\llem$.
Let \sAc and \sBc respectively denote the number of signatures for
$(m,\sn,i)$ from processes of $K$ received by processes of $A$ and $B$ at most two communication steps after $p_i$ \mbr-broadcast $(m,\sn)$.
Figure~\ref{fig:msg-dist} represents the distribution of
such signatures among processes of $K$,
sorted by decreasing number of signatures received.
Each processes of $A$ can receive at most \klem signatures (that is,
all signatures) from processes of $K$, while each process of $B$
can receive at most $\lfloor\frac{n+\tb}{2}\rfloor$ signatures
from processes of $K$ two communication
steps after $p_i$ \mbr-broadcasts $(m,\sn)$.
For the sake of simplicity, we use \qlem in the place of
$\lfloor\frac{n+\tb}{2}\rfloor$ in some parts of this proof.

\begin{figure}[ht]
\centering
\begin{tikzpicture}
    \def\heightA{2.5}
    \def\heightB{1.5}
    
    \def\widthA{3}
    \def\widthAB{6}
    
    \def\colorA{black}
    \def\colorB{black}
    \def\colorC{black}

    \draw[-{>[scale=2.5,length=2,width=3]}] (0,-.25) -- (0,\heightA+.5);
    \node[align=center] at (-1.5,\heightA) {\# received\\signatures};
    \draw[-{>[scale=2.5,length=2,width=3]}] (-.25,0) -- (\widthAB+.5,0);
    \node[align=center] at (\widthAB+1.5,0) {\# correct\\processes};
    \draw (\widthAB,0) -- (\widthAB,-.1);
    
    \node at (-.4,\heightA) {\klem};
    \draw (-.1,\heightA) -- (\widthA,\heightA) -- (\widthA,-.1);
    \node at (\widthA,-.4) {\llem};
    \draw[\colorA,decorate,decoration={brace}] (\widthA-.1,-.1) -- (.1,-.1);
    \node[\colorA] at (\widthA/2,-.5) {$A$};
    \node[\colorA] at (\widthA/2,\heightA/2) {\sAc};
    
    \node at (-1.15,\heightB) {$\big\lfloor \frac{n+\tb}{2} \big\rfloor = \qlem$};
    \draw (0,\heightB) -- (-.1,\heightB);
    \draw[dashed] (0,\heightB) -- (\widthA,\heightB);
    \draw (\widthA,\heightB) -- (\widthAB,\heightB) -- (\widthAB,-.1);
    \node at (\widthAB,-.4) {$c$};
    \draw[\colorB,decorate,decoration={brace}] (\widthAB-.1,-.1) -- (\widthA+.1,-.1);
    \node[\colorB] at ({(\widthAB+\widthA)/2},-.5) {$B$};
    \node[\colorB] at ({(\widthAB+\widthA)/2},\heightB/2) {\sBc};
\end{tikzpicture}
\caption{Distribution of signatures among processes of $A$ and $B$ two communication steps after $p_i$ \mbr-broadcast $(m,\sn)$}
\label{fig:msg-dist}
\end{figure}
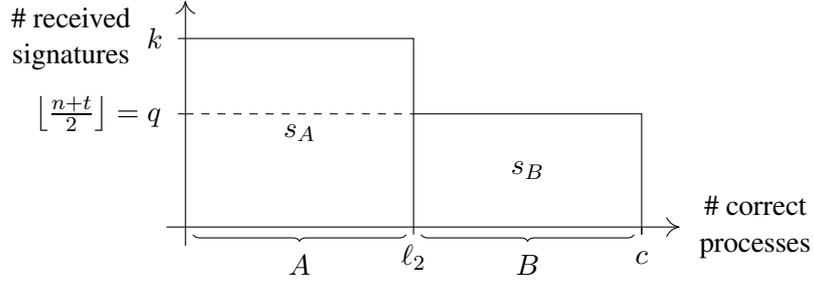

\noindent From these observations, we infer the following inequalities:
\begin{align*}
    \llem\klem &\geq \sAc,\\
    (c-\llem) \qlem &\geq \sBc.
\end{align*}

By the definition of the message adversary, a $\bundlem(m,\sn,i,\{\sig_x,\sig_i\})$ \imp broadcast by a correct process $p_x$ is eventually received by at least $c-\tm$ correct processes.
As a consequence, in total, the minimum number of signatures for $(m,\sn,i)$ collectively received by correct processes as a result of broadcasts by processes in $K$ in the first two asynchronous communication steps is $\klem(c-\tm)$.
We thus have:
\begin{align*}
    \sAc+\sBc &\geq \klem(c-\tm).
\end{align*}
By combining the previous inequalities, we obtain:
\begin{align}
    \llem\klem+(c-\llem)\qlem &\geq \klem(c-\tm), \nonumber\\
    \llem\klem+c\qlem-\llem\qlem &\geq \klem(c-\tm), \nonumber\\
    \llem\klem-\llem\qlem &\geq \klem(c-\tm)-c\qlem, \nonumber\\
    \llem (\klem-\qlem) &\geq \klem(c-\tm)-c\qlem. \label{eq:l-fact-sup-k-fact}
\end{align}
By Lemma~\ref{lem:sufficient-for-quorum},
we know that $\klem \geq c-\tm > \big\lfloor \frac{n+\tb}{2} \big\rfloor = \qlem$,
so we can rewrite~(\ref{eq:l-fact-sup-k-fact}) into:
\begin{align}
    \llem &\geq \frac{\klem(c-\tm)-c\qlem}{\klem-\qlem}. \label{eq:min-l-with-k}
\end{align}

Let us define a function \fnlem such that $\fnlem(\klem) = \frac{\klem(c-\tm)-c\qlem}{\klem-\qlem}$.
As we seek the lowest guaranteed value for \llem, we want to find
the minimum of \fnlem on $\klem \in [c-\tm,c]$.
To this end, let us first study the derivative of \fnlem.
The image $\fnlem(\klem)$ is of the form $\frac{u}{v}$, so we have:
\begin{align*}
    \fnlem'(\klem) &= \frac{u'v-uv'}{v^2}
    = \frac{(c-\tm)(\klem-\qlem)-(\klem(c-\tm)-\qlem c)}{(\klem-\qlem)^2},\\
    &= \frac{(c-\tm)(\klem-\qlem)-\klem(c-\tm)+\qlem c}{(\klem-\qlem)^2}
    = \frac{\qlem c-\qlem(c-\tm)}{(\klem-\qlem)^2}
    = \frac{\qlem\tm}{(\klem-\qlem)^2}.
\end{align*}

As $q$ and $d$ are by definition positive, 
we know that
$\fnlem'(\klem) = \frac{\qlem\tm}{(\klem-\qlem)^2}$ is positive, or null when $\tm=0$.
Therefore, \fnlem is monotonically increasing on $\klem \in [c-\tm,c]$,
and the minimum value for \llem can be found when
\klem is also minimum, that is, when $\klem = c-\tm$.
Thus, when we replace \klem by $c-\tm$
in~(\ref{eq:min-l-with-k}), we obtain:
\begin{align}
    \llem &\geq \frac{(c-\tm)(c-\tm)-c\qlem}{c-\tm-\qlem}
    = \frac{(c-\tm)(c-\tm-\qlem)-\qlem\tm}{c-\tm-\qlem}, \nonumber\\
    &\geq c-\tm-\frac{\qlem\tm}{c-\tm-\qlem}. \label{eq:min-l-without-k}
\end{align}

Let us denote by \llemmin the minimum number of correct processes that receive a quorum
of strictly more than $\frac{n+\tb}{2}$ valid distinct signatures for $(m,\sn,i)$
two communication steps after $p_i$ \mbr-broadcast $(m,\sn)$, such that $\llemmin \leq \llem = |A|$.
As the right hand side of~(\ref{eq:min-l-without-k}) is not always an integer, we have:
\begin{align*}
    \llemmin &= \Big\lceil c-\tm-\frac{\qlem\tm}{c-\tm-\qlem} \Big\rceil
    = c-\tm+\Big\lceil-\frac{\qlem\tm}{c-\tm-\qlem} \Big\rceil,\\
    &= c-\tm-\Big\lfloor \frac{\qlem\tm}{c-\tm-\qlem} \Big\rfloor,
    \tag{as $\Forall x \in \mathbb{R}, \lceil -x \rceil = -\lfloor x \rfloor$}\\
    &= c-\tm-\Big\lfloor\frac{\tm\lfloor\frac{n+\tb}{2}\rfloor}{c-\tm-\lfloor\frac{n+\tb}{2}\rfloor}\Big\rfloor.
    \tag{by definition of \qlem}
\end{align*}

Hence, at least $\llemmin = c-\tm-\Big\lfloor\frac{\tm\lfloor\frac{n+\tb}{2}\rfloor}{c-\tm-\lfloor\frac{n+\tb}{2}\rfloor} \Big\rfloor$
processes of $K$ receive strictly more than $\frac{n+\tb}{2}$ valid distinct signatures
for $(m,\sn,i)$ two communication steps after $p_i$ \mbr-broadcasts $(m,\sn)$.
For every process $p_a$ of $A$:
\begin{itemize}
    \item If $p_a$ does not pass the condition at line~\ref{SB-MBRB-cond-vld} after receiving the last signature of the quorum in a \bundlem \imp, it is necessarily because $p_a$ already \mbr-delivered some $(-,\sn,i)$, because processes of $K$ are correct and all their \bundlem \imps include the signature for $(m,\sn,i)$ by $p_i$.
    But let us remind that, as the sender $p_i$ is correct, it is impossible for $p_a$ to \mbr-deliver anything different from $(m,\sn,i)$.
    Therefore, $p_a$ has already \mbr-delivered $(m,\sn,i)$ at line~\ref{SB-MBRB-dlv}.
    
    \item If $p_a$ passes the condition at line~\ref{SB-MBRB-cond-vld} after processing the last $\bundlem(m,\sn,i,\{\sig_i,\sig_x\})$ \imp of the quorum from a process $p_x$, then $p_a$ saves the signature $\sig_x$ at line~\ref{SB-MBRB-save-sigs}, and after it passes the condition at line~\ref{SB-MBRB-cond-dlv} (as it has saved strictly more than $\frac{n+\tb}{2}$ signatures) and finally \mbr-delivers $(m,\sn,i)$ at line~\ref{SB-MBRB-dlv}.
\end{itemize}

Therefore, all processes of $A$, which are at least
$\llemmin = c-\tm-\Big\lfloor\frac{\tm\lfloor\frac{n+\tb}{2}\rfloor}{c-\tm-\lfloor\frac{n+\tb}{2}\rfloor}\Big\rfloor$,
\mbr-deliver $(m,\sn,i)$ at line~\ref{SB-MBRB-dlv} at most two communication steps after $p_i$ \mbr-broadcast $(m,\sn)$.
\end{proof}


\begin{lemma} \label{lem:dlv-3-rnd-if-cond}
If a correct process $p_i$ \mbr-broadcasts $(m,\sn)$ and $\tm < c-\sqrt{c \times \frac{n+\tb}{2}}$, then at least $c-\tm$ correct processes \mbr-deliver $(m,\sn,i)$ at most three communication steps later.
\end{lemma}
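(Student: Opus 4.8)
The plan is to bootstrap from Lemma~\ref{lem:amt-dlv-2-rnd}. Under the stated bound on \tm, that lemma will already guarantee that \emph{at least one} correct process \mbr-delivers $(m,\sn,i)$ within two communication steps, and from there one further propagation step---essentially the argument used for \MBR-Global-delivery---lifts the count to $c-\tm$. So the proof naturally splits in two parts: (i) show the hypothesis $\tm<c-\sqrt{c\times\frac{n+\tb}{2}}$ forces the lower bound of Lemma~\ref{lem:amt-dlv-2-rnd} to be at least $1$; (ii) show that a single correct process that \mbr-delivers $(m,\sn,i)$ triggers $c-\tm$ correct \mbr-deliveries one communication step later.

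For part (i), I would write $\qlem=\big\lfloor\frac{n+\tb}{2}\big\rfloor$, so that Lemma~\ref{lem:amt-dlv-2-rnd} guarantees $\llemmin=c-\tm-\big\lfloor\frac{\tm\qlem}{c-\tm-\qlem}\big\rfloor$ deliveries within two steps, with $c-\tm-\qlem>0$ by Lemma~\ref{lem:sufficient-for-quorum} so that the expression is well defined. Since $c-\tm$ is an integer, $\llemmin\geq1$ is equivalent to $\big\lfloor\frac{\tm\qlem}{c-\tm-\qlem}\big\rfloor\leq c-\tm-1$, hence to $\frac{\tm\qlem}{c-\tm-\qlem}<c-\tm$, which (clearing the positive denominator) rearranges to $(c-\tm)^2>c\,\qlem$, i.e.\ $\tm<c-\sqrt{c\,\qlem}$. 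Because $\qlem=\big\lfloor\frac{n+\tb}{2}\big\rfloor\leq\frac{n+\tb}{2}$, the hypothesis gives $\tm<c-\sqrt{c\times\frac{n+\tb}{2}}\leq c-\sqrt{c\,\qlem}$, so indeed $\llemmin\geq1$. I expect this algebraic translation---in particular keeping the floor manipulations correct when passing between $\lfloor\frac{n+\tb}{2}\rfloor$ and $\frac{n+\tb}{2}$---to be the main obstacle; everything after it is routine.

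For part (ii), fix a correct process $p_a$ that \mbr-delivers $(m,\sn,i)$ at most two communication steps after $p_i$ \mbr-broadcasts $(m,\sn)$ (such a $p_a$ exists by part (i) together with Lemma~\ref{lem:amt-dlv-2-rnd}). When $p_a$ executes line~\ref{SB-MBRB-dlv} it has just passed line~\ref{SB-MBRB-cond-dlv}, so it holds a set \sigs of strictly more than $\frac{n+\tb}{2}$ valid distinct signatures for $(m,\sn,i)$---including $p_i$'s own signature $\sig_i$, since $p_a$ passed line~\ref{SB-MBRB-cond-vld}---and it has broadcast $\bundlem(m,\sn,i,\sigs)$ at line~\ref{SB-MBRB-bcast-quorum} at that same instant (local steps taking zero time). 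By the definition of the message adversary this \imp is received, one communication step later and hence within three communication steps of the \mbr-broadcast, by a set $K$ of at least $c-\tm$ correct processes. For each $p_k\in K$ I would replay the case analysis of the \MBR-Global-delivery proof: either $p_k$ fails line~\ref{SB-MBRB-cond-vld}, which---since $\sig_i\in\sigs$---can only be because it already \mbr-delivered some $(-,\sn,i)$, necessarily $(m,\sn,i)$ as $p_i$ is correct (by \MBR-Validity and \MBR-No-duplicity); or $p_k$ passes the condition, saves \sigs at line~\ref{SB-MBRB-save-sigs}, passes line~\ref{SB-MBRB-cond-dlv}, and \mbr-delivers $(m,\sn,i)$ at line~\ref{SB-MBRB-dlv}. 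In both cases all $|K|\geq c-\tm$ correct processes \mbr-deliver $(m,\sn,i)$ within three communication steps, which is the claim.
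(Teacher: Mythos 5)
Your proof is correct, and its second half (propagating the quorum one step further to a set of at least $c-\tm$ correct processes, with the same two-case analysis as \MBR-Global-delivery) coincides with the paper's. Where you genuinely diverge is in the first half: the paper does \emph{not} invoke Lemma~\ref{lem:amt-dlv-2-rnd} here, but instead runs a direct, self-contained counting argument --- at least $c-\tm$ signatures are created and broadcast in the first two steps, yielding at least $(c-\tm)^2$ signature receptions spread over $c$ correct processes, so by averaging some correct process receives more than $\frac{(c-\tm)^2}{c} > \frac{n+\tb}{2}$ of them and assembles a quorum by step two. You instead obtain the existence of such a process by showing that the hypothesis $\tm < c-\sqrt{c\times\frac{n+\tb}{2}}$ forces $\llemmin \geq 1$ in Lemma~\ref{lem:amt-dlv-2-rnd}; your chain of equivalences ($\llemmin\geq 1 \iff \lfloor\frac{\tm\qlem}{c-\tm-\qlem}\rfloor\leq c-\tm-1 \iff (c-\tm)^2 > c\qlem$, then $\qlem\leq\frac{n+\tb}{2}$) is sound, including the floor manipulations, and it needs one small observation you correctly rely on: a correct process that \mbr-delivers necessarily executes line~\ref{SB-MBRB-bcast-quorum} at the same instant, so a delivery within two steps yields a quorum broadcast within two steps. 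The two routes rest on the same underlying inequality $(c-\tm)^2 > c\times\frac{n+\tb}{2}$; the paper's version is shorter and independent of the earlier lemma, while yours buys a structural insight --- the three-step condition is (up to the floor) exactly the condition under which Lemma~\ref{lem:amt-dlv-2-rnd} still guarantees at least one two-step delivery --- at the cost of depending on the heavier machinery of that lemma's proof.
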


\begin{proof}
Let us assume that a correct process $p_i$ \mbr-broadcasts $(m,\sn)$ and that $\tm < c-\sqrt{c \times \frac{n+\tb}{2}}$.
Process $p_i$ must unreliably broadcast a first $\bundlem(m,\sn,i,\{\sig_i\})$ \imp (where $\sig_i$ is the signature of $(m,\sn,i)$ by $p_i$) at line~\ref{SB-MBRB-bcast}.
This initial \imp is received by at least $(c-\tm-1)$ other correct processes, due to our assumption on the message adversary. This counts for a first communication step.
    
In the second communication step, each process $p_j$ of these $(c-\tm-1)$ correct processes unreliably broadcasts its own $\bundlem(m,\sn,i,\{\sig_j,\sig_i\})$ \imp (where $\sig_j$ is the signature of $(m,\sn,i)$ by $p_j$) at line~\ref{SB-MBRB-fwd}.
At the end of the second communication step, in total, at least $(c-\tm)$ distinct signatures for $(m,\sn,i)$ have been created and unreliably broadcast by correct processes (counting that of $p_i$), resulting in at least $(c-\tm)^2$ receptions of said signatures by correct processes.
As there are $c$ correct processes, this means that, on average, each correct process has received at least $\frac{(c-\tm)^2}{c}$ signatures by the end of the second communication step, and that at least one correct process, $p_k$, receives (and saves at line~\ref{SB-MBRB-save-sigs}) at least this number of signatures.

From the Lemma hypothesis $\tm < c-\sqrt{c \times \frac{n+\tb}{2}}$ and using simple algebraic transformations, we can derive $\frac{(c-\tm)^2}{c} > \frac{n+\tb}{2}$.
Therefore, $p_k$ reaches a quorum of signatures, that is, it passes the condition at line~\ref{SB-MBRB-cond-dlv} and unreliably broadcast this quorum of signatures at line~\ref{SB-MBRB-bcast-quorum}, two communication steps after the \mbr-broadcast of $(m,\sn)$ by $p_i$.
By definition of the message adversary, this quorum of signatures is received by $c-\tm$ correct processes, which save it at line~\ref{SB-MBRB-save-sigs} and thus pass the condition at line~\ref{SB-MBRB-cond-dlv} and finally \mbr-deliver $(m,\sn,i)$ at line~\ref{SB-MBRB-dlv}, three communication steps after the \mbr-broadcast of $(m,\sn)$ by $p_i$.
\end{proof}


\begin{lemma}[\MBR-Time-cost] \label{lem:mbrb-time7-cost}
If a correct process $p_i$ \mbr-broadcasts an \app $m$ with sequence number \sn, then $\lgd = c-\tm$ correct processes \mbr-deliver $m$ from $p_i$ with sequence number \sn at most \\~\\
$\rtc = \left\{\begin{array}{ll}
2 & \text{if}~~
    \tm < \frac{c-\lfloor\frac{n+\tb}{2}\rfloor}
    {\lfloor\frac{n+\tb}{2}\rfloor +1} \\
3 & \text{if}~~
    \tm < c-\sqrt{c \times\frac{n+\tb}{2}}\\
>3 & \text{otherwise}
\end{array}\right\}$ communication steps later.
\end{lemma}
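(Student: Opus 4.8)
The plan is to handle the three regimes of \tm listed in the statement one at a time, reducing each to a result already established. In every case the target is identical: exhibit $c-\tm$ correct processes that \mbr-deliver $(m,\sn,i)$ within the announced number $\rtc$ of communication steps. I would open by abbreviating $\qlem=\lfloor\frac{n+\tb}{2}\rfloor$ and recording, from Lemma~\ref{lem:sufficient-for-quorum}, that $c-\tm-\qlem>0$; this strict positivity is what legitimizes the floor manipulation used in the first regime.

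First regime, $\tm<\frac{c-\qlem}{\qlem+1}$. The idea is that this inequality is \emph{precisely} the condition under which the correction term $\big\lfloor\frac{\tm\qlem}{c-\tm-\qlem}\big\rfloor$ appearing in Lemma~\ref{lem:amt-dlv-2-rnd} collapses to $0$: since its numerator is nonnegative and (by the above) its denominator is positive, this floor equals $0$ iff $\tm\qlem<c-\tm-\qlem$, which rearranges to $\tm(\qlem+1)<c-\qlem$, i.e.\ to the case hypothesis. Hence $\llemmin=c-\tm$, and Lemma~\ref{lem:amt-dlv-2-rnd} already yields $c-\tm$ correct \mbr-deliveries within two communication steps, so $\rtc=2$ suffices.

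Second and third regimes. If the first condition fails but $\tm<c-\sqrt{c\times\frac{n+\tb}{2}}$, nothing is needed beyond quoting Lemma~\ref{lem:dlv-3-rnd-if-cond}, which gives at least $c-\tm$ correct \mbr-deliveries within three communication steps, so $\rtc=3$. In the remaining (``otherwise'') case no tight numeric bound is claimed --- only that $c-\tm$ correct processes \mbr-deliver after \emph{some} finite number of steps (recorded as ``$>3$''). Here I would invoke \MBR-Local-delivery to obtain a correct process $p_j$ that \mbr-delivers $(m,\sn,i)$, say at the end of communication step $s$; by the guard at line~\ref{SB-MBRB-cond-dlv}, to do so $p_j$ broadcasts at line~\ref{SB-MBRB-bcast-quorum} a set \sigs of strictly more than $\frac{n+\tb}{2}$ valid distinct signatures for $(m,\sn,i)$. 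By the message-adversary bound this \bundlem \imp reaches at least $c-\tm$ correct processes one step later, and the argument already made in the proof of \MBR-Global-delivery shows each of them \mbr-delivers $(m,\sn,i)$ by step $s+1$ (it either has already done so --- necessarily the same \app, since $p_i$ is correct --- or it saves \sigs at line~\ref{SB-MBRB-save-sigs}, clears line~\ref{SB-MBRB-cond-dlv}, and \mbr-delivers at line~\ref{SB-MBRB-dlv}). As $s$ is finite, so is $s+1$.

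I expect the only real content to be the algebraic equivalence in the first regime --- verifying that the stated threshold on \tm is exactly the vanishing of the floor term, with the denominator-positivity caveat discharged by Lemma~\ref{lem:sufficient-for-quorum}; everything else is a direct appeal to Lemmas~\ref{lem:amt-dlv-2-rnd} and~\ref{lem:dlv-3-rnd-if-cond} and to the liveness properties. One mild subtlety worth a sentence: the three conditions are read as a priority list, so in the second regime one may silently assume the first fails, though Lemma~\ref{lem:dlv-3-rnd-if-cond} does not require it.
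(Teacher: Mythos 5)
Your proof is correct and follows essentially the same route as the paper's: case~1 reduces to Lemma~\ref{lem:amt-dlv-2-rnd} by showing the floor term vanishes under the stated threshold (your equivalence argument matches the paper's one-directional derivation), and case~2 is a direct appeal to Lemma~\ref{lem:dlv-3-rnd-if-cond}. Your explicit treatment of the ``otherwise'' regime via \MBR-Local-delivery and the Global-delivery argument is a small addition the paper omits, but it changes nothing substantive.
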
     

\begin{proof}
Let us consider a correct process $p_i$ that \mbr-broadcasts $(m,\sn)$.
By exhaustion:
\begin{itemize}
    \item Case where
    $\tm < \frac{c-\lfloor\frac{n+\tb}{2}\rfloor}
        {\lfloor\frac{n+\tb}{2}\rfloor +1}$.
    
    By Lemma~\ref{lem:amt-dlv-2-rnd}, at least $c-\tm-\Big\lfloor\frac{\tm\lfloor\frac{n+\tb}{2}\rfloor}{c-\tm-\lfloor\frac{n+\tb}{2}\rfloor}\Big\rfloor$ correct processes \mbr-deliver $(m,\sn,i)$ two communication steps after $p_i$ has \mbr-broadcast $(m,\sn)$.
    We have:
    \begin{align*}
        \tm &< \frac{c-\lfloor\frac{n+\tb}{2}\rfloor}{\lfloor\frac{n+\tb}{2}\rfloor +1}, \tag{case assumption}\\
        \tm\Big\lfloor\frac{n+\tb}{2}\Big\rfloor+\tm &< c-\Big\lfloor\frac{n+\tb}{2}\Big\rfloor, \tag{as $\lfloor\frac{n+\tb}{2}\rfloor +1 > 0$}\\
        \tm\Big\lfloor\frac{n+\tb}{2}\Big\rfloor &< c-\tm-\Big\lfloor\frac{n+\tb}{2}\Big\rfloor, \\
        \frac{\tm\lfloor\frac{n+\tb}{2}\rfloor}{c-\tm-\lfloor\frac{n+\tb}{2}\rfloor} &< 1, \tag{as $c-\tm > \lfloor\frac{n+\tb}{2}\rfloor$ by Lemma~\ref{lem:sufficient-for-quorum}}\\
        \left\lfloor\frac{\tm\lfloor\frac{n+\tb}{2}\rfloor}{c-\tm-\lfloor\frac{n+\tb}{2}\rfloor}\right\rfloor &\leq 0, \\
        c-\tm-\left\lfloor\frac{\tm\lfloor\frac{n+\tb}{2}\rfloor}{c-\tm-\lfloor\frac{n+\tb}{2}\rfloor}\right\rfloor &\geq c-\tm = \lgd.
    \end{align*}
    Hence, \lgd correct processes \mbr-deliver $(m,\sn,i)$
    at most two communication steps
    after $p_i$ has \mbr-broadcast $(m,\sn)$.
    
    \item Case where $\tm < c-\sqrt{c \times \frac{n+\tb}{2}}$.
    
    Lemma~\ref{lem:dlv-3-rnd-if-cond} applies and at least $c-\tm = \lgd$ correct processes \mbr-deliver $(m,\sn,i)$ at most three communication steps after $p_i$ has \mbr-broadcast $(m,\sn)$.
    \qedhere
\end{itemize}
\end{proof}

\begin{lemma}[\MBR-Message-cost]
\label{msg-cost}
The \mbr-broadcast of an \app by a correct process $p_i$
entails the sending of at most $\omc = 2n^2$ \imps by correct processes.
\end{lemma}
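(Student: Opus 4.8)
The plan is to bound the number of \bundlem \imps broadcast by each correct process as a consequence of $p_i$'s invocation of $\mbrbroadcast(m,\sn)$, and then to multiply this per-process bound by the fan-out $n$ of the \broadcast macro and by the number $c\le n$ of correct processes. Since \bundlem is the only \imp type used by the algorithm, this accounts for all \imps.

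First I would argue that every \bundlem \imp a correct process broadcasts because of this invocation carries the single triplet $(m,\sn,i)$. A correct process broadcasts a \bundlem \imp only at line~\ref{SB-MBRB-bcast}, line~\ref{SB-MBRB-fwd}, or line~\ref{SB-MBRB-bcast-quorum}; the last two are reached only after passing the test at line~\ref{SB-MBRB-cond-vld}, which requires a valid signature of the claimed sender on the carried \app. Since $p_i$ is correct and, by the check performed before lines~\ref{SB-MBRB-save-own-sig-init} and~\ref{SB-MBRB-save-own-sig-fwd}, no correct process ever signs two distinct \apps with the same sequence number, the only \app bearing a valid $p_i$-signature for \sn is $m$; hence every \bundlem \imp relayed by a correct process on behalf of this invocation has the form $\bundlem(m,\sn,i,\cdot)$, and line~\ref{SB-MBRB-bcast} produces such an \imp only inside $p_i$'s own call.

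Next I would establish, for each broadcasting line, an ``at most once per correct process'' bound. Line~\ref{SB-MBRB-bcast} is executed exactly once, and only by $p_i$. Line~\ref{SB-MBRB-fwd} is guarded by line~\ref{SB-MBRB-cond-fwd}, which requires that the executing process has not yet signed $(-,\sn,i)$; as the process signs $(m,\sn,i)$ at line~\ref{SB-MBRB-save-own-sig-fwd} immediately beforehand, line~\ref{SB-MBRB-fwd} fires at most once, and it never fires for $p_i$ (which already signed at line~\ref{SB-MBRB-save-own-sig-init}). Line~\ref{SB-MBRB-bcast-quorum} is immediately followed by $\mbrdeliver(m,\sn,i)$ at line~\ref{SB-MBRB-dlv}, after which the guard at line~\ref{SB-MBRB-cond-vld} is permanently falsified for $(-,\sn,i)$, so line~\ref{SB-MBRB-bcast-quorum} also fires at most once. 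Therefore $p_i$ broadcasts at most twice (at lines~\ref{SB-MBRB-bcast} and~\ref{SB-MBRB-bcast-quorum}) and every other correct process broadcasts at most twice (at lines~\ref{SB-MBRB-fwd} and~\ref{SB-MBRB-bcast-quorum}).

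Finally, since each \broadcast sends exactly $n$ \imps, the total number of \imps sent by correct processes is at most $2cn \le 2n^2$, using $c\le n$. I do not expect a genuine obstacle; the only points requiring care are the reduction of all relayed \bundlem \imps to the unique triplet $(m,\sn,i)$, and the fact that the per-line ``at most once'' claims rely on the monotonicity of the relevant local state (``already signed'', ``already \mbr-delivered'').
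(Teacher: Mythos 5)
Your proof is correct and follows essentially the same counting as the paper: at most one ``initial/relay'' broadcast per correct process (line~\ref{SB-MBRB-bcast} or~\ref{SB-MBRB-fwd}) plus at most one quorum broadcast (line~\ref{SB-MBRB-bcast-quorum}), each fan-out $n$, giving $2n^2$. Your version is in fact slightly more careful than the paper's, since you explicitly justify each ``at most once per process'' claim from the guards at lines~\ref{SB-MBRB-cond-vld} and~\ref{SB-MBRB-cond-fwd}, whereas the paper phrases the same count informally as two successive ``steps'' of at most $n^2$ \imps each.
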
     

\begin{proof}
The broadcast of an \imp by a correct process at line~\ref{SB-MBRB-bcast} entails its forwarding by at most $n-1$ other correct processes at line~\ref{SB-MBRB-fwd}.
As each broadcast by correct process corresponds to the sending of $n$ \imps, then at most $n^2$ \imps are sent in a first step.

In a second step, at least one correct process reaches a quorum of signatures and passes the condition at line~\ref{SB-MBRB-cond-dlv}, and then broadcasts this quorum of signatures at line~\ref{SB-MBRB-bcast-quorum}.
Upon receiving this quorum, every correct process also passes the condition at line~\ref{SB-MBRB-cond-dlv} (if it has not done it already) and broadcasts the \imp containing the quorum at line~\ref{SB-MBRB-bcast-quorum}.
Hence, at most $n^2$ \imps are also sent in this second step, which amounts to a maximum of $\omc = 2n^2$ \imps sent in total.
\end{proof}


\paragraph{An additional property}
The reader can check from the previous proofs that the algorithm satisfies the following \MBR-delivery property.
If there is a set $K$ of $k$ correct processes, $1 \leq k \leq \tm$, such that there is a finite time $\tau$ after which the message adversary never eliminates the \imps sent to them, then, after $\tau$, each process of $K$ \mbr-delivers all the \apps \mbr-broadcast by correct processes.


\section{A Tightness Bound} \label{sec:mbrb-tight}

\paragraph{Definition}
An algorithm implementing a broadcast communication abstraction
is {\it event-driven} if, as far as the correct processes are
concerned, only (i) the invocation of the
broadcast operation that is provided to the application by the broadcast communication abstraction, or
(ii) the reception of an \imp---sent by a correct or a Byzantine process---can generate the sending of \imps
(using the underlying unreliable network-level \broadcast operation). 

\begin{theorem}[\MBR-Necessary-condition]
\label{theo-necessity}
When $n \leq 3\tb+2\tm$, there is no event-driven 
(signature-free or signature-based) algorithm
implementing the {\em \MBRB} communication abstraction on top of an
$n$-process asynchronous system in which up to \tb processes may be
Byzantine and where a message adversary may suppress up to
\tm copies of each \imp broadcast by a
correct process.\footnote{Let us recall that the underlying
communication operation offered by the system is an unreliable
broadcast defined in Section~\ref{sec:model}.}
\end{theorem}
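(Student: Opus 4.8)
The plan is to prove the impossibility by a standard partitioning/indistinguishability argument, adapting to the asynchronous setting with a message adversary the classical $n > 3\tb$ lower bound for Byzantine reliable broadcast. First I would assume, for contradiction, that an event-driven \MBRB algorithm exists when $n \leq 3\tb + 2\tm$. Since the condition $n > 2\tm$ is already known to be necessary (otherwise the adversary partitions the system), I may assume $n > 2\tm$, so there is room to carve the $n$ processes into meaningful groups. The key idea is to split the correct processes into two blocks, each of size roughly $\tm$, that the message adversary permanently isolates from each other (the adversary is allowed to suppress up to $\tm$ copies of \emph{each} \imp, which is exactly enough to keep a $\tm$-sized block from ever hearing a given \imp), while a group of up to $\tb$ Byzantine processes plays a ``two-faced'' role, telling one block that application value $m$ was \mbr-broadcast and the other block that $m'$ was.

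The concrete scenario I would build uses a partition $P_1 \uplus P_2 \uplus B$ of $\{1,\dots,n\}$ with $|B| \le \tb$ and $|P_1|, |P_2|$ chosen so that the arithmetic $n \le 3\tb + 2\tm$ makes each $P_j$ small enough (size $\le \tm$ up to the Byzantine slack) to be silenced by the adversary, yet large enough that a correct sender inside it is forced by \MBR-Local-delivery to have \emph{some} correct process \mbr-deliver. I would then consider three executions: (i) a correct process in $P_1$ \mbr-broadcasts $m$, the adversary cuts all \imps from $P_1$-and-$B$ to $P_2$, and $B$ behaves toward $P_1$ like honest relayers of $m$; by \MBR-Local-delivery some correct process of $P_1$ \mbr-delivers $(m,\sn,\cdot)$; (ii) the symmetric execution with $P_1$ and $P_2$ swapped and value $m'$; (iii) a ``merged'' execution in which $B$ is actually Byzantine and simultaneously simulates toward $P_1$ exactly the messages it sent in execution (i) and toward $P_2$ exactly those of execution (ii), while the adversary again severs the $P_1$–$P_2$ link. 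The event-driven hypothesis is what makes this splicing legitimate: correct processes only react to invocations and received \imps, so each block's view in execution (iii) is identical to its view in the corresponding single-value execution, hence a correct process of $P_1$ \mbr-delivers $m$ and a correct process of $P_2$ \mbr-delivers $m'$, with $m \neq m'$ and the same sequence number $\sn$, contradicting \MBR-No-duplicity.

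The main obstacle I anticipate is getting the counting exactly right so that all three executions are simultaneously \emph{admissible}: each block must be silenceable within the budget of $\tm$ suppressed copies per \imp (this is where $|P_j| \le \tm$, possibly after absorbing Byzantine processes into the count, is used), the Byzantine set must have size at most $\tb$, every correct process must end up with a non-empty ``delivery obligation'' from \MBR-Local-delivery, and the indistinguishability must be airtight — in particular I must make sure the adversary's suppressions in execution (iii) are consistent with a single legitimate adversary strategy and that no correct process in one block ever receives an \imp that would reveal the other block's value. I would handle the boundary cases ($\tm = 0$ recovering the pure $n \le 3\tb$ bound, and the case where $n \le 2\tm$ is excluded a priori) separately, and I would remark that since the construction never relies on forging signatures — the Byzantine processes only replay and equivocate with their own legitimately signed \imps — the argument applies verbatim to both the signature-free and the signature-based models, which is exactly the ``(signature-free or signature-based)'' clause in the statement.
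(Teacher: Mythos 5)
Your overall strategy (partition, two-faced Byzantine behaviour, indistinguishability, contradiction with \MBR-No-duplicity) is the right family of argument, but two concrete steps of your construction do not go through, and they are precisely the points where the paper's proof has to work harder. First, the counting: with $n = 3\tb+2\tm$ and $|B|=\tb$, your two blocks satisfy $|P_1|=|P_2|=\tb+\tm$, and a \tm-message adversary cannot suppress all $\tb+\tm$ copies of an \imp sent to such a block --- so ``the adversary cuts all \imps from $P_1$-and-$B$ to $P_2$'' is not an admissible adversary strategy. Your parenthetical ``up to the Byzantine slack'' is exactly the unresolved difficulty. The paper resolves it by splitting each of your blocks in two: a set $D_j$ of size \tm that the message adversary silences, and a set $Q_j$ of size \tb whose \imps are merely \emph{delayed by asynchrony} until a time $\tau$ chosen after the fatal deliveries. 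The reason processes deliver anyway is a separate quorum argument that you do not make: since the adversary can permanently isolate \tm correct processes and the algorithm is event-driven, no correct process can wait for \imps from more than $n-\tb-\tm = 2\tb+\tm$ distinct processes without risking blocking forever, and $|Q_1\cup D_1\cup Q_3| = 2\tb+\tm$ exactly. Relying on \MBR-Local-delivery alone, as you do, only forces delivery if the block is genuinely cut off, which brings you back to the first problem.

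Second, the contradiction: \MBR-No-duplicity forbids two correct processes from delivering different \apps \emph{from the same sender with the same sequence number}. In your executions (i) and (ii) the senders of $m$ and $m'$ are distinct correct processes, one in $P_1$ and one in $P_2$; splicing these executions therefore yields deliveries of $(m,\sn,j)$ and $(m',\sn,j')$ with $j\neq j'$, which violates nothing. For the argument to close, the \emph{same} identity must appear as the originator of both $m$ and $m'$, which is why in the paper the Byzantine set $Q_3$ itself simulates the \mbrbroadcast invocation, presenting itself as the sender of $m$ to $Q_1\cup D_1$ and of $m'$ to $Q_2\cup D_2$. Your closing remark that the Byzantine processes ``only replay and equivocate with their own legitimately signed \imps'' shows you sense this, but it is incompatible with the executions you actually describe, in which correct processes in $P_1$ and $P_2$ are the broadcasters. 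Fixing both points essentially forces you onto the paper's five-set partition $Q_1,Q_2,Q_3,D_1,D_2$ with a Byzantine sender in $Q_3$.
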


\begin{proof}
Without loss of generality the proof considers the case
$n = 3\tb+2\tm$.  
Let us partition the $n$ processes into five sets $Q_1, Q_2, Q_3$, $D_1$,
and $D_2$, such that $|D_1|= |D_2|= \tm$ and
$|Q_1|= |Q_2|=|Q_3| = \tb$.\footnote{For the case $n < 3\tb+2\tm$,
the partition is such that $\mmax(|Q_1|,|D_2|) \leq \tm$ and
$\mmax(|Q_1|,|Q_2|,|Q_3|) \leq \tb$.}
So, when considering the sets $Q_1$, $Q_2$, and  $Q_3$, there are
executions in which all the processes of either $Q_1$ or $Q_2$ or
$Q_3$ can be Byzantine, while the processes of the two other sets are not.

The proof is by contradiction. So,  assuming that there is an
event-driven algorithm $A$ that builds the \MBR-broadcast abstraction
for $n = 3\tb+2\tm$, let us consider an execution $E$ of $A$ in which the
processes of $Q_1$, $Q_2$,  $D_1$, and $Q_2$  are not Byzantine while
all the processes of $Q_3$ are Byzantine.

Let us observe that the message adversary can isolate up to \tm
processes by preventing them from receiving any \imps. 
Without loss of generality, let us assume that the adversary
isolates a set of \tm correct processes not containing the sender of
the \app. As $A$ is event-driven,  these \tm isolated
processes do not send \imps during the execution $E$ 
of $A$. As a result, no correct process can expect \imps
from more than $(n-\tb-\tm)$ different processes without risking
being blocked forever. Thanks to the \mbrbassum $n = 3\tb+2\tm$, this
translates as ``no correct process can expect \imps
from more than $(2\tb+\tm)$ different processes without risking being
blocked forever''.

In the execution $E$, the (Byzantine) processes of $Q_3$ simulate the
\mbr-broadcast of an \app such that this \app appears
as being \mbr-broadcast by one of them and is \mbr-delivered as the
\app $m$ to the processes of $Q_1$ (hence the processes
of $Q_3$ appear, to the processes of $Q_1$, as if they were correct)
and as the \app $m' \neq m$ to the processes of $Q_2$
(hence, similarly to the previous case, the processes of $Q_3$ appear
to the processes of $Q_2$ as if they were correct).
Let us call $m$-messages (resp., $m'$-messages) the \imps generated by the event-driven algorithm $A$ that entails the \mbr-delivery of $m$ (resp., $m'$). Moreover, the execution $E$ is such that:
\begin{itemize}
    \item concerning the $m$-messages: the message adversary suppresses all the $m$-messages sent to the processes of $D_2$, and asynchrony delays the reception of all the $m$-messages sent to $Q_2$ until some time $\tau$ defined below.\footnote{
    Equivalently, we could also say that asynchrony delays the reception of all the $m$-messages sent to $D_2 \cup Q_2$ until time $\tau$.
    The important point is here that, due to the assumed existence of Algorithm $A$, the processes of $Q_1$ and and $D_1$ \mbr-deliver $m$ with $m$-messages from at most $2\tb+\tm$ different processes.}
    So, as $|Q_1 \cup D_1 \cup Q_3| =n-\tb-\tm=2\tb+\tm$, Algorithm A will cause the processes of $Q_1$ and $D_1$ to \mbr-deliver~$m$.\footnote{
    Let us notice that this is independent from the fact that the processes in $Q_3$ are Byzantine or not.}
    
    \item concerning the $m'$-messages:
    the message adversary suppresses all the $m'$-messages sent to the processes of $D_1$, and the asynchrony delays the reception of all the $m'$-messages sent to $Q_1$ until time $\tau$.
    As previously, as $|Q_2 \cup D_2 \cup Q_3| =n-\tb-\tm=2\tb+\tm$, Algorithm $A$ will cause the
    processes of $Q_2$ and $D_2$ to \mbr-deliver~$m'$.

    \item Finally, the time $\tau$ occurs after the \mbr-delivery
    of $m$ by the processes of $D_1$ and $Q_1$, and after
    the \mbr-delivery of $m'$ by the processes of $D_2$ and $Q_2$.
\end{itemize}

It follows that different non-Byzantine processes \mbr-deliver
different \apps for the same \mbr-broadcast
(or a fraudulent simulation of it) issued by a Byzantine process
(with possibly the help of other Byzantine processes).
This contradicts the \MBR-No-Duplicity property, which
concludes the proof of the theorem.
\end{proof}

\begin{theorem}[Algorithm optimality]
\label{theo-nce-and-suff}
Considering an asynchronous $n$-process
system in which up to \tb processes can be Byzantine and where a
\tm-message adversary can suppress \imps,
Algorithm~{\em\ref{algo:sb-mbrb}} is optimal with respect to the pair of
values $\langle \tb,\tm \rangle$.
\end{theorem}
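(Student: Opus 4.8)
The plan is to combine the two previous theorems in the natural way: Theorem~\ref{theo:sb-mbr-correctness} shows that Algorithm~\ref{algo:sb-mbrb} correctly implements \MBRB whenever the \mbrbassum holds, i.e.\ whenever $n > 3\tb + 2\tm$, so the algorithm \emph{works} for every pair $\langle \tb, \tm \rangle$ satisfying this inequality. Theorem~\ref{theo-necessity} shows the converse direction, that no (event-driven, signature-free or signature-based) algorithm can implement \MBRB when $n \leq 3\tb + 2\tm$. Hence the set of pairs $\langle \tb, \tm \rangle$ for which \MBRB is solvable at all is exactly $\{\langle \tb,\tm\rangle : n > 3\tb + 2\tm\}$, and Algorithm~\ref{algo:sb-mbrb} covers this entire set. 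That is precisely the statement that the algorithm is optimal with respect to the pair $\langle \tb, \tm\rangle$.

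Concretely I would write the proof in three short steps. First, observe that Algorithm~\ref{algo:sb-mbrb} is itself event-driven: inspecting the pseudo-code, a correct process only invokes \broadcast either inside the $\mbrbroadcast(m,\sn)$ operation (line~\ref{SB-MBRB-bcast}) or upon the reception of a \bundlem \imp (lines~\ref{SB-MBRB-fwd} and~\ref{SB-MBRB-bcast-quorum}), which matches the definition of event-driven given just before Theorem~\ref{theo-necessity}. Second, invoke Theorem~\ref{theo-necessity}: for any $\langle \tb, \tm\rangle$ with $n \leq 3\tb + 2\tm$, no event-driven algorithm — in particular no algorithm at all of the relevant class, and a fortiori not a ``better'' one — implements \MBRB. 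Third, invoke Theorem~\ref{theo:sb-mbr-correctness}: for every $\langle \tb,\tm\rangle$ with $n > 3\tb + 2\tm$ (equivalently, whenever the \mbrbassum holds), Algorithm~\ref{algo:sb-mbrb} does implement \MBRB, with the stated guarantees on \lgd, \rtc, and \omc. Combining the second and third steps, the resilience region of Algorithm~\ref{algo:sb-mbrb} coincides with the maximum possible resilience region, which is the claimed optimality.

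I do not expect a genuine obstacle here, since all the hard work has already been done in Theorems~\ref{theo:sb-mbr-correctness} and~\ref{theo-necessity}; the only thing that needs care is the bookkeeping of the boundary case. One should make explicit that $n > 3\tb + 2\tm$ and $n \leq 3\tb + 2\tm$ partition all pairs $\langle \tb, \tm\rangle$, so there is no gap between the sufficiency and the necessity regions — the threshold $n = 3\tb + 2\tm$ lies on the impossibility side, consistent with the strict inequality assumed by the algorithm. A minor subtlety worth a sentence is that Theorem~\ref{theo-necessity} is stated for event-driven algorithms, so the optimality claim is, strictly speaking, optimality within the event-driven class; since Algorithm~\ref{algo:sb-mbrb} itself belongs to that class, this is exactly the right comparison and the claim stands as stated.
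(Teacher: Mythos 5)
Your proposal is correct and follows essentially the same route as the paper, which simply combines Theorem~\ref{theo-necessity} (necessity of $n>3\tb+2\tm$) with Theorem~\ref{theo:sb-mbr-correctness} (sufficiency via Algorithm~\ref{algo:sb-mbrb}). Your added remarks on the algorithm being event-driven and on the boundary case $n=3\tb+2\tm$ are sensible elaborations the paper leaves implicit.
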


\begin{proof}
Theorem~\ref{theo-necessity} has shown that the condition
$n > 3\tb+2\tm$ is necessary, while Algorithm~{\ref{algo:sb-mbrb}}
has shown that this condition is sufficient
(Theorem~\ref{theo:sb-mbr-correctness}).
\end{proof}


\section{Conclusion}
\label{sec:conclusion}
This article has presented a new communication abstraction
(denoted \MBR) 
that extends Byzantine reliable broadcast (as defined by Bracha
and Toueg~\cite{B87,BT85}) to systems where, at the underlying implementation level, an adversary may suppress
some subset of implementation messages used by the processes to co-operate. From a practical point of view,
this kind of message loss captures 
phenomena such as silent churn, input-disconnection, etc. A  signature-based algorithm implementing the corresponding 
Byzantine-tolerant reliable broadcast in the presence of a message adversary has been
presented and proven correct. This algorithm assumes $n > 3\tb+2\tm$
(where $n$ is the number of processes, \tb is the maximum number of
Byzantine processes, and \tm is an upper bound on the power of the message adversary),
which has been shown to be a necessary and sufficient condition.
message adversary), 

When there is no message adversary, this algorithm is optimal 
    from both Byzantine resilience and the number of
    communication steps. 
  These properties are also satisfied in other circumstances including a message adversary whose power
  $\tm $ is restricted to some well-defined threshold.


\section*{Acknowledgments}
\sloppy This work was partially supported by 
the French ANR projects ByBloS (ANR-20-CE25-0002-01)
and PriCLeSS (ANR-10-LABX-07-81)
devoted to the modular design of building blocks for
large-scale Byzantine-tolerant multi-users applications.
The authors want to thank Colette Johnen, Elad Schiller, 
and Stefan Schmid
for their kind invitation to participate in the SSS 2021 conference. 

\bibliographystyle{plain}

\begin{thebibliography}{}

\end{thebibliography}


\begin{thebibliography}{99}

\bibitem{ANRX21}
Abraham I., Nayak K., Ren L., and Xiang Z.,
Good-case latency of Byzantine broadcast: a complete categorization.
{\it Proc. 40th ACM Symposium on Principles of Distributed Computing 
(PODC'21)}, ACM Press, pp.~331-341 (2021) (arXiv:2102.07240v2)

\bibitem{AG13} 
Afek Y. and Gafni E., 
Asynchrony from synchrony. 
{\it Proc. Int'l Conference on Distributed Computing and Networking
(ICDCN'13)},  Springer LNCS 7730, pp.~225-239, (2013)

  
\bibitem{AFRT21} 
Albouy T., Frey D.,  Raynal M., and Ta\"{\i}ani F., 
Byzantine-tolerant reliable broadcast in the presence of silent churn (Invited Talk).
{\it Proc. 23th Int'l Symposium on Stabilization, Safety,
and Security of Distributed Systems  (SSS'21)} Springer LNCS 13046, pp.~21-33 (2021)


\bibitem{AFRT22} 
Albouy T., Frey D.,  Raynal M., and Ta\"{\i}ani F., 
$k\ell$-cast: on the foundations of Byzantine reliable broadcast
in the presence of message adversaries.
(May 2022) (arXiv:2204.13388)

\bibitem{AFRT20}
Auvolat A., Frey D., Raynal M., and Ta\"{\i}ani F.,
Money Transfer Made Simple: a Specification, a Generic Algorithm, and its Proof.
{\it Bulletin of the EATCS}, 132 (2020)

\bibitem{BDFRT21}
Bonomi S., Decouchant J., Farina G., Rahli V., and Tixeuil S.,
Practical Byzantine Reliable Broadcast on Partially Connected Networks.
{\it 41th IEEE International Conference on Distributed Computing Systems, ICDCS 2021}, IEEE, pp.~506-516 (2021)

\bibitem{B87}
Bracha G.,
Asynchronous Byzantine agreement protocols.
{\it Information \& Computation}, 75(2):130-143 (1987)

\bibitem{BT85}
Bracha G. and Toueg S.,
Asynchronous consensus and broadcast protocols.
{\em Journal of the ACM}, 32(4):824-840 (1985)

\bibitem{CGR11}
Cachin Ch., Guerraoui R., and Rodrigues L.,
{\it Reliable and secure distributed programming}, Springer,
367 pages, ISBN 978-3-642-15259-7 (2011)


\bibitem{CS09}
Charron-Bost B., and Schiper A., 
The heard-of model: computing in distributed systems with benign faults.
{\it Distributed Computing}, 22(1):49-71 (2009)

\bibitem{CK21}
Cohen S., and Keidar I.,
Tame the Wild with Byzantine Linearizability: Reliable Broadcast, Snapshots, and Asset Transfer.
{\it Proc. 35rd Int'l Symposium on Distributed Computing (DISC'21)},
pp.~18:1-18:18 (2021)

\bibitem{CGKKMPPSTX20}
Collins D., Guerraoui R., Komatovic J., Kuznetsov P., Monti M., Pavlovic M., Pignolet Y.-A., Seredinschi D.-A., Tonkikh A., and Xygkis A.,
Online Payments by Merely Broadcasting Messages.
{\it Proc. 50th Annual IEEE/IFIP International Conference on Dependable Systems and Networks (DSN 2020)}, pp.~26-38 (2020)

\bibitem{DKSS22}
Danezis G., Kokoris-Kogias L., Sonnino A., and Spiegelman A.,
Narwhal and Tusk: a DAG-based mempool and efficient BFT consensus.
{\it Proc. 17th European Conference on Computer Systems (EUROSYS'22)}, ACM Press, pp.~34-50 (2022)

\bibitem{D82}
Dolev D.,
The Byzantine generals strike again. 
{\it Journal of Algorithms},  3:14-20 (1982)

\bibitem{GKMPS19}
Guerraoui R., Kuznetsov P., Monti M., Pavlovic M., and Seredinschi D.-A.,
The Consensus Number of a Cryptocurrency.
{\it Proc. 38th ACM Symposium on Principles of Distributed Computing 
(PODC'19)}, ACM Press, pp.~307-316 (2019)

\bibitem{GKKPST20}
Guerraoui R.,  Komatovic J.,  Kuznetsov P., Pignolet P.A.,
Seredinschi D.-A.,  and Tonkikh A., 
Dynamic  Byzantine reliable broadcast.  
{\it Proc. 24th Int'l Conference on Principles of Distributed Systems
(OPODIS'20)}, LIPIcs Vol.~184, Article 23, 18 pages (2020)

\bibitem{GKMPS20}
Guerraoui G., Kuznetsov P., Monti M., Pavlovic M., 
and Seredinschi D.-A.,
Scalable Byzantine reliable broadcast.
{\it Proc. 33rd Int'l Symposium on Distributed Computing (DISC'19)},
LIPIcs Vol.~146, Article 22, 16 pages (2019)



\bibitem{HKL20}
Hirt M., Kastrato A., and Liu-Zhang C.-D.,
Multi-threshold asynchronous reliable broadcast and consensus.
{\it Proc. 24th Int'l Conference on Principles of Distributed Systems
(OPODIS'20)}, LIPICs Vol.~184, Article 6, 16 pages (2020)

\bibitem{IR16}
Imbs D. and  Raynal M., 
Trading $t$-resilience for efficiency in asynchronous Byzantine reliable 
broadcast.
{\it Parallel Processing Letters}, Vol.~26(4), 8 pages (2016)

\bibitem{LSP82}
Lamport L., Shostack R., and Pease M., The Byzantine generals problem.
{\it ACM Transactions on Programming Languages and Systems}, 
4(3)-382-401, (1982)



 

\bibitem{NRSVX20}
Nayak K., Ren L., Shi E.,  Vaidya N.H.,  and Xiang Z.,
Improved extension protocols for Byzantine broadcast and agreement.
{\it Proc.  34rd Int'l Symposium on Distributed Computing (DISC'20)},
LIPIcs Vol.~179, Article 28, 16 pages (2020)

\bibitem{PSL80}
Pease M., Shostak R., and Lamport L.,
Reaching agreement in the presence of faults.
{\it Journal of the ACM}, 27:228-234 (1980)



\bibitem{R15} 
Raynal M.,
Message adversaries.
{\it Encyclopedia of Algorithms}, Springer (2015)




\bibitem{R18} 
Raynal M.,
{\it Fault-tolerant message-passing distributed systems: an
algorithmic approach.}
Springer, 480 pages, ISBN 978-3-319-94140-0 (2018)

\bibitem{R21} 
Raynal M.,
On the versatility of Bracha's  Byzantine reliable broadcast algorithm.
{\it Parallel Processing Letters},  31(3),
2150006 (9 pages) (2021)

\bibitem{RS13}
Raynal M. and Stainer J., 
Synchrony weakened by message adversaries vs asynchrony restricted by 
failure detectors.
{\it Proc. 32nd ACM Symposium on Principles of Distributed Computing 
(PODC'13)}, ACM Press, pp.~166-175 (2013)


\bibitem{SW89}
Santoro N. and Widmayer P.,
Time is not a healer. 
{\it Proc. 6th Annual Symposium on Theoretical Aspects of Computer Science
(STACS'89)},  Springer  LNCS 349, pp.~304-316 (1989)


\bibitem{SW07}
Santoro N. and Widmayer P.,
Agreement in synchronous networks with ubiquitous faults. 
{\it  Theoretical  Computer Science},  384(2-3):~232-249 (2007)

\bibitem{TZKZ20}
Tseng L., Zhang Q., Kumar S., and Zhang Y.,
Exact Consensus under Global Asymmetric Byzantine Links.
{\it Proc. 40th IEEE International Conference on Distributed Computing Systems (ICDCS 2020)}, pp.~721-731 (2020)




\end{thebibliography}

\appendix

\end{document}